\theoremstyle{plain} 
\newtheorem{theorem}{Theorem}[section] 
\newtheorem{lemma}[theorem]{Lemma} 
\newtheorem{corollary}[theorem]{Corollary}
\theoremstyle{definition}
\numberwithin{equation}{section}
\newcommand{\F}{{\mathbb F}} 
\newcommand{\Q}{\mathbb{Q}} 
\newcommand{\Z}{\mathbb{Z}}
\newcommand{\C}{\mathbb{C}} 
\newcommand{\G}{\mathbb{G}}
\newcommand{\av}{{\mathbf{a}}}
\newcommand{\bv}{{\mathbf{b}}}
\newcommand{\gv}{{\mathbf{g}}}
\newcommand{\hv}{{\mathbf{h}}}
\newcommand{\rv}{{\mathbf{r}}}
\newcommand{\sv}{{\mathbf{s}}}
\newcommand{\xv}{{\mathbf{x}}}
\newcommand{\zv}{{\mathbf{z}}}
\newcommand{\xiv}{{\boldsymbol\xi}}
\newcommand{\omegav}{{\boldsymbol\omega}}
\newcommand{\zerom}{{\boldsymbol{0}}}
\newcommand{\Bm}{{\mathbf{B}}}
\newcommand{\Cm}{{\mathbf{C}}}
\newcommand{\Dm}{{\mathbf{D}}}
\newcommand{\Hm}{{\mathbf{H}}}
\newcommand{\Mm}{{\mathbf{M}}}
\newcommand{\As}{{\mathcal{A}}}
\newcommand{\Bs}{{\mathcal{B}}}
\newcommand{\Cs}{{\mathcal{C}}}
\newcommand{\Hs}{{\mathcal{H}}}
\newcommand{\Os}{{\mathcal{O}}}
\newcommand{\Qs}{{\mathcal{Q}}}
\newcommand{\Xs}{{\mathcal{X}}}
\newcommand{\Ys}{{\mathcal{Y}}}
\newcommand{\Zs}{{\mathcal{Z}}}
\newcommand{\Tr}{{\mathrm{Tr}}}
\newcommand{\ignore}[1]{}
\begin{document}

\title{Quantum Oracle Classification - The Case of Group Structure}
\author{{\sc Mark Zhandry} \\
	Massachusetts Institute of Technology, USA \\
	{\tt mzhandry@gmail.com} }

\date{}
\maketitle

\begin{abstract} The Quantum Oracle Classification (QOC) problem is to classify a function, given only quantum black box access, into one of several classes without necessarily determining the entire function.  Generally, QOC captures a very wide range of problems in quantum query complexity.  However, relatively little is known about many of these problems.
	
In this work, we analyze the a subclass of the QOC problems where there is a group structure.  That is, suppose the range of the unknown function $A$ is a commutative group $G$, which induces a commutative group law over the entire function space.  Then we consider the case where $A$ is drawn uniformly at random from some subgroup $\As$ of the function space.  Moreover, there is a homomorpism $f$ on $\As$, and the goal is to determine $f(A)$.  This class of problems is very general, and covers several interesting cases, such as oracle evaluation\footnote{What we call oracle \emph{evaluation} has been called oracle \emph{interrogation} by~\cite{FOCS:VanDam98,EC:BonZha13}};  polynomial interpolation, evaluation, and extrapolation; and parity.  These problems are important in the study of message authentication codes in the quantum setting, and may have other applications.

We exactly characterize the quantum query complexity of every instance of QOC with group structure in terms of a particular counting problem.  That is, we provide an algorithm for this general class of problems whose success probability is determined by the solution to the counting problem, and prove its exact optimality.  Unfortunately, solving this counting problem in general is a non-trivial task, and we resort to analyzing special cases.  Our bounds unify some existing results, such as the existing oracle evaluation and parity bounds.  In the case of polynomial interpolation and evaluation, our bounds give new results for secret sharing and information theoretic message authentication codes in the quantum setting.  

\end{abstract}

\section{Introduction}

\label{sec:intro}

The quantum oracle classification (QOC) problem is to classify the function, given only quantum black box access, into one of several classes, perhaps without necessarily determining the entire function.  This very broad class of problems captures many existing quantum query problems in the literature, including all decisional problems, as well as some computational problems such as Quantum Polynomial Interpolation~\cite{KK10}, Quantum Oracle Evaluation~\cite{FOCS:VanDam98}, and others.  

In this work, we study a subset of the oracle classification problem where the oracles are promised to belong to a subspace $\As$ of the set of all functions, and the goal is to evaluate some homomorphism $f$ on the oracle.  More precisely, we consider oracles $O:\Xs\rightarrow G$, for some input domain $\Xs$ and codomain $G$, where $G$ is an abelian group.  The group structure of $G$ naturally induces a group structure on the set of all oracles, where addition of oracles is performed input-by-input.  Then we consider $\As$ to be a subgroup of the set of all oracles, and $f$ is a homomorphism on $\As$.  We even consider an ``adversarial'' setting where the algorithm is allowed to choose $f$ for itself from some set of ``allowed'' homomorphisms $\Hs$, potentially even after making queries to the oracle.

This general class of QOC problems, while excluding many decisional problems, still contains many computational problems, such as:

\paragraph{Quantum Summation (QSum).} The Parity problem asks, given an oracle $A:\Xs\rightarrow\{0,1\}$, determine the parity of $A$: $\sum_{x\in\Xs}A(x)\mod 2$.  More generally, we consider the summation problem, where the goal is to, given $A:\Xs\rightarrow G$ for an additive group $G$, determine $f(A)=\sum_{x\in\Xs}A(x)$.  It is easy to see that $f$ is a homomorphism on the function space.  Classically, this problem requires querying the entire domain to achieve any success probability better than random guessing.  In the quantum setting, this problem was initially studied by Farhi et al.~\cite{FGGS98} and Beals et al.~\cite{BBCMdW01}, who show that, in the binary case, $\lceil|\Xs|/2\rceil$ queries are sufficient to determine the parity with probability 1, and with any fewer queries it is impossible to do better than random guessing.  Thus, quantum algorithms can obtain a factor of 2 speedup relative to classical algorithms in the binary case, even when considering bounded-error algorithms.  To the best of our knowledge, this problem has not been studied in the more general setting.  It is natural to ask: does the factor of 2 speed up generalize to large outputs, or does the speedup vary based on the group?

\paragraph{Quantum Oracle Evaluation (QOE).} Given $q$ queries to a function $A:\Xs\rightarrow\Ys$, and $k$ distinct inputs $x_1,\dots,x_k$, evaluate $A$ on each of the $x_i$.  That is, output $(y_1,\dots,y_k)$ where $y_i=A(x_i)$.  This is encoded as an OCP problem by interpreting $\Ys$ as $\Z_N$, where $N=|\Ys|$.  Then the map $A\mapsto (A(x_1),\dots,A(x_k))$ is a homomorphism on the function space.  The ``adversarial'' setting corresponds to the algorithm getting to choose the inputs $x_1,\dots,x_k$, and hence the homomorphism $f$, potentially \emph{after} making the $q$ queries.

Classically, clearly with $q<k$ queries, it is impossible to do better than randomly guessing $k-q$ points.  In the quantum setting, this problem was initially studied by van Dam~\cite{FOCS:VanDam98}, who showed that in the case of binary outputs, it is possible to set $q=0.5001 k$, and succeed with overwhelming probability.  Boneh and Zhandry~\cite{EC:BonZha13} generalize van Dam's algorithm to general $N$, and give an exact matching lower bound.  They show that, for any constant $N$, it is possible to set $q$ to be a constant fraction of $k$ and succeed with overwhelming probability, though the constant goes to 1 as $N$ grows.  For $N\gg k$, they show that even for $q=k-1$, the best success probability is at most $k/N\ll 1$, showing that in this regime quantum algorithms offer little advantage over classical algorithms.

\paragraph{Quantum Polynomial Interpolation (QPI).} Given $q$ queries to a degree-$d$ polynomial $A:\F\rightarrow \F$, determine the polynomial $A$, where $\F$ is some finite field.  Using the additive group of $\F$, $\As$ forms a subgroup of the space of all functions.  Determining the polynomial is equivalent to determining its coefficients, and the map that takes a degree-$d$ polynomial $A$ to its coefficients is a homomorphism on $\As$.  Thus this is an instance of the group QOC problem.  This problem was first studied by Kane and Kutin~\cite{KK10}, who show that $q\geq (d+1)/2$ queries are required to interpolate with probability better than $1/|\F|$.  Boneh and Zhandry~\cite{EC:BonZha13} show that it is possible to interpolate with overwhelming probability when $q=d-1$.  Their techniques can also be used to show that for $q=(d+1)/2$, the success probability is at most $1/q!=1/((d+1)/2)!$.  Thus $q\geq (d+2)/2$ queries are required to interpolate with probability close to 1 for any $d\geq 2$.  However, for $(d+2)/2 < q < d-1$, this problem remained unresolved.

\paragraph{Quantum Polynomial Evaluation (QPEv).} We can also consider several variants of the polynomial interpolation problem.  In Polynomial Evaluation, we are tasked with determining the value of $A(x_i)$ for $k$ distinct points $x_1,\dots,x_k$.  Like Oracle Interrogation, it is straightforward to phrase Polynomial Evaluation as a group QOC problem.  Clearly, if $q\geq k$, this problem is easy even with only classical access: simply query $A$ on each of the $x_i$.  Therefore, the interesting case is when $q<k$.  Notice that for $d<k$, the problem of evaluating $A$ on $k$ points is equivalent to determining the polynomial $A$ itself, as the $k$ points can be interpolated to give the entire polynomial.  However, for $d\geq k$, polynomial evaluation is potentially an easier problem, which makes lower bounds harder.  

This problem relates to information-theoretic \emph{message authentication codes} (MACs) in the quantum setting.  Here, a document is ``signed'' by interpreting the document as a field element, and then applying $A$ to it.  We require that, after seeing $q$ signed documents, an adversary cannot produce an additional signed document.  In the quantum setting, we allow the adversary to even see $q$ superpositions of signed documents, and hope that the adversary cannot then produce $q+1$ classical signed documents.  We even allow the (superpositions) of documents to be chosen by the adversary, as well as the $q+1$ ``forged'' documents.  Such a MAC is called a $q$-time MAC.  When setting $A$ to be a random degree-$d$ polynomial, this corresponds to the Polynomial Evaluation problem with $k=q+1$.  Here, it is important to consider the ``adversarial'' setting, where the adversary can choose the forged documents \emph{after} making its queries.

Classically, degree $q$ polynomials suffice for $q$-time MACs, because $q$ input/output pairs of a polynomial are insufficient to determine the polynomial at any other points.  As Boneh and Zhandry~\cite{EC:BonZha13} showed, $q$ queries are sufficient to interpolate a degree $q$ polynomial entirely, so degree $q$ polynomials are \emph{not} quantum-secure $q$-time MACs.  Boneh and Zhandry show that degree $3q$ polynomials are sufficient for $q$-time MACs, but leave open exactly how high a degree is necessary.  In particular, for 1-time MACs, whereas classically degree 1 is sufficient, quantumly degree 1 is insufficient, but degree 3 is sufficient.  Whether degree 2 polynomials are 1-time MACs is currently unknown.  Answering this question is of practical importance, as the size of the key (that is, the descrption of the polynomial) and evaluation time grow with the degree.

\paragraph{Quantum Polynomial Extrapolation (QPEx).} A final variant of the polynomial interpolation problem we consider is Polynomial Extrapolation.  Here, the there is a particular point, say 0.  The adversary is trying to determine $A(0)$, and is given oracle access to the polynomial everywhere else.  Again, it is natural to encode this as a group QOC problem.

This problem relates to a quantum analog of Shamir secret sharing~\cite{Shamir79}.  In secret sharing, a secret $s\in\F$ is split into $k$ shares $sh_1,\dots,sh_k\in\F$, which are distributed to each of $k$ parties.  The requirement is that any subset of $t+1$ parties can reconstruct the secret, but any subset of $t$ or fewer parties cannot.  Shamir builds such a scheme by setting $A$ to be a random degree-$t$ polynomial conditioned on $A(0)=s$, and setting $sh_i=A(i)$.  Then from any $t+1$ shares, $A$ can be classically interpolated, but from any $t$ shares, it cannot.  Thus there is a tight threshold, above which the secret can be reconstructed, and below which the secret remains hidden.

Damg{\aa}rd et al.~\cite{DFNS14} explore superposition attacks on such a scheme, where an adversary can obtain superpositions of shares subsets of up to $t$ users, which corresponds to making $t$ quantum queries to $A$.  They show that degree-$t$ polynomials are insufficient for such an attack model, a result that also follows from the fact that degree-$t$ polynomials can be interpolated with $t$ queries.  Therefore, degree $d\geq t+1$ is required.  However, such a polynomial now requires at $d+1\geq t+2$ honest users to reconstruct. Thus, there is no tight threshold between classical honest reconstruction and quantum dishonest reconstruction.

It is natural to ask, then, what happens if the shares are themselves superpositions.  Is it now possible to have a tight threshold as in the classical case?  The hope would be that, for any $t$, there is some degree $d$ such that any $t$ quantum queries is insufficient to reconstruct the secret, but $t+1$ quantum queries are sufficient.  This corresponds exactly to asking if the Polynomial Extrapolation problem has a tight threshold above which the problem can be solved, but below which it cannot.

\subsection{Our Results}

We give \emph{exact} bounds on the maximal success probability for a quantum oracle adversary for any instance of the group QOC problem.  That is, we provide an algorithm achieving the maximal success probability, and show that this probability is exactly optimal.  Our expression for this probability is given as the solution to a certain counting problem that depends on the particular group QOC instance.  Our general bounds show several interesting things:
\begin{itemize}
	\item Our algorithm is a \emph{parallel} algorithm, meaning all the queries are made simultaneously in parallel, rather that sequentially.  Yet our lower bound is for arbitrary \emph{sequential} algorithms.  Thus, for any instance of the group QOC problem, the best possible algorithm is a parallel algorithm.  This is interesting because many quantum speedups require sequential queries;  For example Grover's algorithm is inherently and necessarily sequential~\cite{Zalka99}.  Such speedups are not applicable to group QOC instances.
	\item In the case where the adversary can choose the homomorphism $f$ from some class $\Hs$, our algorithm chooses $f$ \emph{before} making any queries, whereas our lower bound is for algorithms that may choose $f$ adversarially, \emph{after} making all queries.  Thus, our results show that for any instance of the group QOC problem, there is no advantage to choosing $f$ at the end.  This greatly generalizes a result of Boneh and Zhandry~\cite{EC:BonZha13}, which show that this holds for the Oracle Evaluation problem.
	\item In the case where the homomorphism attains one of two possible values, our analysis shows that below a certain threshold number of queries it is impossible to do better than randomly guessing the output, and above the threshold the correct output can be obtained with probability 1.  Thus there is a tight threshold between below which quantum queries are useless and above which there is an exact quantum algorithm.  More generally, the maximum success probability is always an integer multiple of the success probability obtained by random guessing.
	\item While we state our bounds as average case results, they apply equally well to the \emph{worst} case.  In particular, our average case lower bound immediately gives a worst case lower bound, as any algorithm that has success probability at least $p$ in the worst case also has success probability at least $p$ on average.  Moreover, our algorithm is a worst-case algorithm, in that its success probability is independent of the oracle it is given.  This is because the QOC problem with group structure admits a query-preserving \emph{randomized self reduction}, meaning that any worst case oracle can be turned into a random oracle, and a solution for the random oracle gives a solution for the worst case oracle.
\end{itemize}

Next, we turn to analyzing the exact bound itself.  Unfortunately, in general determining simple expressions for the solutions to our counting problems is difficult.  Instead, we analyze several special cases, yielding the following results:
\begin{itemize}
	\item In the case of Parity (a special case of Summation) and Oracle Evaluation, we re-prove the existing results~\cite{FGGS98,BBCMdW01,EC:BonZha13} using our new characterization, thus unifying these results.  
	\item In the case of Summation, we generalize the known bounds to computing the sum of outputs, where the range of $A$ is an arbitrary group $G$.  We show that, as $G$ increases, the quantum speedup diminishes.  In particular, the quantum speedup is a factor of  $1+O(1/|G|)$, even if we allow bounded error algorithms.  For very large $G$, the quantum speedup becomes negligible.
	\item For Polynomial Interpolation, we show that $q$ queries to a degree-$d$ polynomial for $q\leq d/2$ results in a success probability at most $1/|\F|$, where $\F$ is the field.  This reproves a result of Kane and Kutin~\cite{KK10}.  We also show that for $q\geq d/2+1$, the success probability approaches 1 for $|\F|\gg d!$.  Thus, for constant \emph{even} $d$, there is a sharp threshold at $d/2$, at or below which interpolation is impossible, and above which interpolation is possible with overwhelming probability.  Interestingly, we show that for $q=d/2+1/2$, the success probability is $\approx 1/q!$.  For constant \emph{odd} $d$, this is constant, so there is \emph{no sharp threshold} in this case.  Taking $q$ to be large, however, makes this negligible and gives a sharp threshold at $d/2+1/2$.
	\item For Polynomial Evaluation, the above result for Interpolation shows that $q\geq d/2+1$ suffices to reconstruct the entire polynomial (in the donstant $d$, large $|\F|$ case), and therefore evaluate it on any number of points.  Moreover, $q=d/2+1/2$ suffices for complete reconstruction with constant probability (independent of $|\F|$ but dependent on $d$).  Thus Polynomial Evaluation is \emph{easy} in this regime.  Therefore, degree $d=2q-1$ polynomials are \emph{not} secure $q$-time MACs.  On the flip side, we show that for $q\leq d/2$, the success probability in the Polynomial Evaluation problem vanishes for large $\F$, meaning degree $d=2q$ polynomials \emph{are} $q$-time MACs, which is the best possible.  This improves on the best prior result of Boneh and Zhandry~\cite{EC:BonZha13}, who show that degree $d=3q$ are $q$-time MACs.  In particular, this shows that degree-2 polynomials are 1-time MACs, resolving an open question of Boneh and Zhandry.
	\item For Polynomial Extrapolation, our Interpolation result again shows that $q\geq d/2+1$ suffices to reconstruct the polynomial and evaluate at 0\footnote{Technically, we cannot run the Interpolation algorithm as is, since the Interpolation problem is allowed to evaluate at 0.  However, 0 is just a single point and a negligible fraction of the domain.  Eliminating 0 from the queryable domain therefore results in a negligible decrease in success probability}.  We show that for $q\leq d/2$, it is impossible to do better than random guessing, again reproving a result of Kane and Kutin~\cite{KK10}.  Therefore, for even $d$, we have a tight threshold at $d/2$ between being unable to reconstruct the secret and reconstructing the secret.  Put another way, to have threshold $q$, it suffices to set $d=2q$.  This shows that quantum secret sharing may be possible with a tight threshold, though one would need to show that the shares can be created in such a way as to allow for reconstruction.  
	
	One may ask if $d=2q-1$ suffices for secret sharing with threshold $q$, as using lower degrees could improve efficiency.  For odd $d$, we show for some field sizes that the secret can be reconstructed with probability $\approx 1/q$ for $q=d/2+1/2$, and we conjecture this holds for all field sizes.  Since $q$ is always a polynomial, this cannot be made non-negligible by making $q$ large.  Therefore, setting $d=2q-1$ cannot result in a tight threshold, and instead $d=2q$ is required.
\end{itemize}

\subsection{Our Techniques}

In order to establish our characterization of quantum query complexity in terms of a counting problem, we need to provide an algorithm and a matching lower bound.  

\medskip

For our lower bound, many traditional quantum lower bound techniques such as the adversary method and its generalizations and the polynomial method cannot give the results we need.  One reason is because these techniques inherently give asymptotic query lower bounds.  However, we see that for the particular instances of the group QOC problem studied here, the the quantum speed up over classical algorithms is a factor of 2 or less, and we are therefore interested in the \emph{exact} number of queries required to solve the problem.  An asymptotic bound here is relatively meaningless.

Instead, our lower bound is based on the Rank method of Boneh and Zhandry~\cite{EC:BonZha13}, which does give exact lower bounds.  There, they show that the maximal success probability of any quantum query algorithm is bounded by the success probability of random guessing, times a quantity called the ``rank'' of the algorithm.  The rank of a quantum algorithm is just the dimension of the space spanned by possible output states of the algorithm on different oracles.  Boneh and Zhandry also give a bound on the rank for any oracle algorithm, which equal to the solution to a very simple counting problem: the number of functions that differ from 0 on at most $q$ points.  For the case of Polynomial Interrogation, the resulting bound turns out to be exactly optimal.

Unfortunately, Boneh and Zhandry's general bound on the rank is too weak for many interesting problems.  One problem is that the bound on the rank applies in the case where the oracle can be \emph{any} possible function.  However, in many settings, the oracle is restricted: for example in polynomial interpolation, the function is a degree-$d$ polynomial.  Therefore, while the exiting bounds on rank for general oracles give some bound on the success probability in the restricted setting, it is unlikely to be optimal.

Another problem is that, even if a new tighter rank bound is obtained, it will often give a meaningless result for the maximal success probability.  In particular, the rank method appears to be most useful when the goal is to determine the the oracle completely, and typically gives meaningless bounds if the goal is to determine some partial information about the oracle.  This is because every possible oracle the adversary may be given could potentially contribute to the rank.  Therefore, even if the rank is much smaller than the number of possible oracles, it will still be quite large.  If the goal is to determine only a small amount of information about the oracle (meaning the random guessing probability is not too small), the rank method will give meaningless results.

To make this concrete, consider the case of Polynomial Extrapolation.  With $q=d/2$ queries, an analysis of our algorithm for the \emph{Polynomial Interpolation} problem shows that the rank of a quantum algorithm making $q$ queries to a degree $d=2q$ polynomial can be as high as roughly $|\F|^{2q}/q!$.  If we try to apply the rank method directly to Polynomial Extrapolation, which has a random-guessing success probability of $1/|\F|$, we get an upper bound of $|\F|^{2q-1}/q!\gg 1$ for the success probability.  

To overcome these difficulties, we extend the rank analysis of Boneh and Zhandry to handle density matrices.  That is, for every possible value $v$ in the image of of the homomorphism $f$, we consider the density matrix of final states obtained when given oracle access to a random oracle $A$ conditioned on $f(A)=v$.  Our goal is to bound the ability to distinguish these density matrices for different $v$.  In general, this appears to be a very difficult problem.  However, we use the group structure of the problem to reduce analyzing the distinguishing probability for these density matrices to analyzing the distinguishing probability of certain pure states.  We can then bound the dimension of the space spanned by these pure states (which can be thought of as the effective rank of the algorithm) as the result of a particular counting problem.  Then applying the Rank method of Boneh and Zhandry gives the final result.

\medskip

For our algorithm, we set up a superposition of parallel queries, where each vector of inputs corresponds to one item in the counting problem.  We then show that, after making all of the queries in parallel, the resulting states are in a sense maximally distinguishable.  We obtain a success probability that exactly equals the bound obtained above.

\medskip

Finally, to actually apply our general group QOC result, we need to solve the obtained counting problem for the particular instances of interest.  This turns out to be a non-trivial task, and the problem has very different structures depending on which problem we are solving.  Nonetheless, we give solutions to the problem in the case of QSum, QUE, QPI,QPEv, and QPEx, yielding our results outlined above.

\subsection{Independent and Concurrent Work}

In a very recent concurrent and independent work, Childs et al.~\cite{CvDHS15} also give exact bounds for the Polynomial Interpolation problem discussed above.  In particular, they show that the optimal success probability relates to a particular counting problem.  The counting problem they obtain is exactly the same 
 problem we obtain when applying our new technique to the Polynomial Interpolation problem.   Thus their main result is a special case of our analysis.  They do not analyze the other related problems, such as Polynomial Evaluation or Polynomial Extrapolation.  However, they go a step further in their analysis of the Polynomial Interpolation problem, and give efficient quantum algorithms with essentially optimal success probability.  In contrast, our quantum algorithm, due to its generality in being able to handle \emph{any} oracle classification problem with group structure, is not efficient.

\section{Preliminaries}

\label{sec:prelim}

We will assume familiarity with basic group theory.  Fix a commutative group $G$.  A \emph{character} $\xi:G\rightarrow \C^\times$ of $G$ is any homomorphism from $G$ to the multiplicative group of non-zero complex numbers $\C^\times$ --- that is, $\xi(g+h)=\xi(g)\xi(h)$.  If $G$ has order $n$, it is easy to see that $\xi(g)$ must be an $n$th root of unity for every $g\in G$.  The characters of $G$ form a commutative group with the multiplicative group law $(\xi\cdot\xi')(g)=\xi(g)\xi'(g)$.  Call this group $\hat{G}$.  It is known that $\hat{G}$ is isomorphic to $G$.  For any character $\xi$, it is straightforward to show that $\sum_{g\in G}\xi(g)$ is zero, unless $\xi$ is the trivial homomorphism that is identically 1, in which case the sum is $|G|$. 

Given an integer $N$, let $\omega_N=e^{i 2\pi/N}$, which is a primitive $N$th-root of unity.

\paragraph{Quantum query model.}Let $\Xs$ be a set, $G$ be a finite commutative group (written additively), and consider a function $A:\Xs\rightarrow G$.  The usual model for quantum queries to $A$ is the \emph{controlled-add} model.   Here, a query to $A$ is the unitary operation defined by \[|x,g,z\rangle\rightarrow |x,g+A(x),z\rangle\]

Alternatively, one can consider the \emph{phase} model for quantum queries.  A \emph{phase} query to $A$ is the unitary operation defined by 
\[|x,\xi,z\rangle\rightarrow \xi(A(x))|x,\xi,z\rangle\]

where $\xi$ is a character of $\G$.  The phase model and controlled-add model are actually equivalent, by conjugating with the quantum group Fourier transform unitary, defined by
\[|x,g,z\rangle\rightarrow \frac{1}{\sqrt{|G|}}\sum_{\xi\in\hat{G}}\xi(g)|x,\xi,z\rangle\]

For this work, we will mainly use the phase query model, as it will make our calculations easier; our results also apply equally well to the controlled-add model.

\section{Oracle Classification With Group Structure}

\label{sec:OIPdef}

Let $G$ be a commutative group (written additively) and $\Xs$ some input space.  Then the set $\Os$ of functions $O:\Xs\rightarrow G$ inherits a group law from $G$: $(O+O')(x)=O(x)+O'(x)$.  Consider a subgroup $\As$ of $\Os$, and consider the uniform distribution on this set.  Consider a homomorphism $f$ from $\As$ to some other arbitrary group.  

\paragraph{Group Quantum Oracle Classification (Group QOC) Problem.} The oracle classification problem is to, given $q$ quantum oracle queries to a function $A$ drawn at random from $\As$, determine $f(A)$.  Again, the goal is to devise an algorithm that maximizes the success probability.  

\paragraph{Special Case: Group Quantum Oracle Identification (Group QOI).} The oracle identification problem is to, given $q$ quantum oracle queries to a function $A$ drawn at random from $\As$, determine $A$ completely.  The goal is to devise an algorithm that maximizes the success probability.  Notice that by setting $f$ to be the identity on $\As$, the oracle identification problem is a special case of the oracle classification problem.

\paragraph{Generalization to Adversarially-chosen $f$.}  A generalization of the Group QOC problem allows the adversary itself to choose the homomorphism $f$.  That is, the adversary makes $q$ quantum queries to $A$, and then outputs a homomorphism $f$ together with $f(A)$.  Of course, this problem is trivial if we place no restrictions on $f$: the adversary can simply output the trivial homomorphism for $f$, in which case $f(A)$ is zero.  However, we will require that $f$ comes from some restricted set $\Hs$ of subgroups.  

It may seem that allowing the adversary to choose $f$ will give it extra power: based on the oracle queries made, the adversary can choose $f$ to increase his chances of outputting a good value $f(A)$.  We argue, using a generalization of a technique of Boneh and Zhandry~\cite{EC:BonZha13}, that the adversary might as well just choose a single $f$ up front and always output that $f$.

\begin{lemma}\label{lem:adversarial} Let $\Hs$ be a set of homomorphisms of $\As$.  Let $\Qs$ be an algorithm making $q$ quantum queries to an oracle $A$ drawn from $\As$.  Suppose with probability $\epsilon$, $\Qs$ outputs a homomorphism $f$ and the correct value $f(A)$.  Then there is a fixed homomorphism $f$ and quantum algorithm $\Qs'$ such that $\Qs'$ outputs $f(A)$ with probability at least $\epsilon$.\end{lemma}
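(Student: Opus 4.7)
The plan is to use the randomized self-reduction induced by the group structure of $\As$, generalizing the Boneh--Zhandry technique. First, in the phase query model, for any $A_0 \in \As$ known to the algorithm, a phase query to the shifted oracle $A+A_0$ can be simulated by a single phase query to $A$ followed by the oracle-independent phase $\xi(A_0(x))$ applied on the query registers, since $\xi((A+A_0)(x)) = \xi(A(x))\,\xi(A_0(x))$. Hence the algorithm can freely replace its $q$ queries to $A$ by $q$ queries to $A+A_0$ at no additional query cost.

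Using this, I would define a symmetrized algorithm $\Qs^{\ast}$: sample $A_0 \in \As$ uniformly at random, simulate queries to $A+A_0$ via the shift trick, run $\Qs$ on these simulated queries to obtain its adaptive output $(f,v)$, and return $(f, v - f(A_0))$. Since $A_0$ is uniform and independent of $A$, the shifted oracle $A+A_0$ is uniform in $\As$ regardless of $A$, so $\Qs$ succeeds on the simulated oracle with probability $\epsilon$; and whenever it does, $v = f(A+A_0) = f(A) + f(A_0)$, so $v - f(A_0) = f(A)$ and $\Qs^{\ast}$ also succeeds with probability $\epsilon$. The essential property gained is that the marginal distribution $\pi$ of the homomorphism $f$ output by $\Qs^{\ast}$ depends only on $A+A_0$, and is therefore independent of $A$. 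Letting $\epsilon_f$ denote the conditional success probability of $\Qs^{\ast}$ given that it outputs $f$, we get $\sum_f \pi(f)\,\epsilon_f = \epsilon$ with $\sum_f \pi(f) = 1$, so by a weighted averaging argument there exists $f^{\ast} \in \Hs$ with $\epsilon_{f^{\ast}} \geq \epsilon$.

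The main obstacle is the last step: converting this \emph{conditional} bound into an algorithm $\Qs'$ committed to $f^{\ast}$ up front with \emph{unconditional} success probability at least $\epsilon$. A naive filter---run $\Qs^{\ast}$ and output $v$ only when $f = f^{\ast}$---yields only the joint probability $\pi(f^{\ast})\,\epsilon_{f^{\ast}}$, which can be substantially smaller than $\epsilon$. I would address this by exploiting the $A$-independence of $\pi$: because the event $\{f = f^{\ast}\}$ has $A$-independent probability, one can bias the sampling of $A_0$ (the bias being computable offline from the description of $\Qs$, since it is determined entirely by $\Qs$'s behaviour on uniform inputs) so that the simulation of $\Qs$ on $A + A_0$ outputs $f^{\ast}$ as often as possible, without disturbing the marginal uniformity of $A+A_0$ on which the conditional-success bound relies. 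Verifying that this reweighting indeed yields unconditional success at least $\epsilon_{f^{\ast}} \geq \epsilon$---rather than only $\pi(f^{\ast})\,\epsilon_{f^{\ast}}$---is the central technical step, and is where the transitive action of $\As$ on itself by translation (together with the group-homomorphism structure of $f^{\ast}$) enters in an essential way.
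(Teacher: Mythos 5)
Your symmetrization and averaging steps match the paper exactly: randomize the oracle by a uniform $A_0 \in \As$ so that the simulated oracle $A+A_0$ is uniform independently of $A$, correct the output by $f(A_0)$, observe that the resulting distribution over homomorphisms $f$ is $A$-independent, and conclude by averaging that some $f^\ast$ has conditional success $\epsilon_{f^\ast} \ge \epsilon$. You also correctly identify the remaining difficulty, namely that committing to $f^\ast$ up front must recover the conditional probability $\epsilon_{f^\ast}$, not the joint $\pi(f^\ast)\epsilon_{f^\ast}$.

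Where the proposal breaks is the plan to close that gap by biasing the distribution of $A_0$ ``without disturbing the marginal uniformity of $A+A_0$.'' For a fixed (worst-case) $A$, translation by $A$ is a bijection on $\As$, so the distribution of $A+A_0$ is exactly the pushforward of the distribution of $A_0$; the only distribution on $A_0$ that makes $A+A_0$ uniform for every $A$ is the uniform one. There is therefore no room to bias $A_0$: any reweighting that favours the event $\{f=f^\ast\}$ necessarily destroys the uniformity of $A+A_0$ on which the $A$-independence of $\pi$ and of $\epsilon_{f^\ast}$ rests. The transitivity of the translation action, which you invoke as the resource that should make this work, is precisely what makes it impossible. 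So the final step as stated would fail.

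The paper closes the gap with a different, purely quantum device. After purifying the coin $A_0$ into a superposition register, one has a single unitary $\Qs_0$ whose output distribution over $f$ is provably $A$-independent. The paper then runs $\Qs_0$ on the \emph{dummy} oracle $A=0$, measures only the $f$-register, post-selects on the outcome $f^\ast$ (which occurs with probability $p_{f^\ast}>0$, an $A$-independent quantity), and \emph{uncomputes} the entire run to obtain a modified initial state $|\psi_{f^\ast}\rangle$. The algorithm $\Qs'$ then runs $\Qs_0$ from $|\psi_{f^\ast}\rangle$ against the real oracle $A$. Because the $f$-register's reduced state after the computation is independent of $A$, this rewound initial state forces the $f$-output to be $f^\ast$ with certainty for every $A$, and the success probability conditioned on outputting $f^\ast$ is exactly $\epsilon_{f^\ast}\ge\epsilon$. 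In short: the paper does not reweight the classical randomness (which cannot work); it post-selects and rewinds a quantum register whose marginal is oracle-independent. Your write-up needs to replace the reweighting idea with this rewinding argument, or some equivalent mechanism that conditions on $f=f^\ast$ without touching the $A_0$ distribution.
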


\begin{proof}First let $\Qs_0$ be the following modification to $\Qs$: $\Qs_0$, given quantum oracle access to a function $A\in\As$, chooses a random $A_0\in\As$ at random and simulates $\Qs$ with oracle $A'=A+A_0$.  Answering each of the oracle queries of $\Qs$ requires only a single oracle query to $A$.  When $\Qs$ outputs $f,v'$, $\Qs_0$ outputs $f,v=v'-f(A_0)$.
	
We make two observations:
\begin{itemize}
	\item If $\Qs$ succeeds, then $v'=f(A')=f(A+A_0)=f(A)+f(A_0)$.  Thus $v=v'-f(A_0)=f(A)$, and thus $\Qs_0$ succeeds.  Therefore, $\Qs_0$ succeeds with probability $\epsilon$.  Moreover, this probability is independent of the oracle $f$.
	\item The oracle seen by $\Qs$ is independent of the oracle $A$ seen by $\Qs_0$.  In particular, this means that $\Bs$ is independent of $A$.
\end{itemize}	

Thus we have modified our algorithm so that $f$ is independent of $A$ without hurting the success probability.  The new $\Qs_0$ is no longer a pure algorithm since it flips coins (to choose $A_0$), but it can be easily made pure by using a uniform superposition of all coins.  Let $p_f$ the the probability of obtaining $f$, and let $\epsilon_f$ be the probability of success, conditioned on obtaining $f$.  Then $\epsilon=\sum_{f\in\Hs}p_f\epsilon_f$.  In particular, there exists a $f^*$ such that $\epsilon_{f^*}\geq\epsilon$.  Fix this $f^*$.  We now devise an algorithm that always outputs $f^*$, and succeeds with probability $\epsilon_{f^*}$.  Thus is sufficient for proving the lemma.

Let $|\psi_0\rangle$ be the initial state of $\Qs_0$ before the first oracle query.  Let $|\psi_f\rangle$ be the following state.  Run $\Qs_0$ on the function $A=0$ until just before the final measurement.  Measure the subspace register, obtaining $f$ with probability $p_f$ (since the probability of obtaining $f$ is independent of $A$).  Un-compute the entire procedure; the new ``initial'' state obtained is $|\psi_f\rangle$.  Notice that if we re-compute $\Qs_0$ using oracle $f=0$ on this initial state, we will always obtain the same $f$.  Moreover, the success probability will be exactly $\epsilon_f$ since the success probability is independent of $A$, and instead only depends on $f$.

We now describe the algorithm $\Qs'$.  $\Qs'$ runs $\Qs_0$, except that is uses the initial state $|\psi_{f^*}\rangle$ instead of $|\psi_0\rangle$.  Since $f$ is independent of the oracle $A$, $\Qs'$ will always output $f^*$ (which would be obtained if the oracle was $A=0$).  Moreover, $\Qs'$ has success probability $\epsilon_{f^*}$.  
\end{proof}

Given Lemma~\ref{lem:adversarial}, it suffices to consider the case where there is just a single fixed homomorphism $f$.

\section{Our Main Theorem}

\label{sec:theorem}
Let $\Os$ be the group of all functions from $\Xs$ to $G$, and let $\As$ be a subgroup.  Let $f$ be a homomorphism from $\As$ into some other group.  Let $\Bs$ be the kernel of the homomorphism, and $\Cs$ be the quotient group $\As/\Bs$, interpreted as a subgroup of $\As$.  Then determining $f(A)$ for an element $A\in\As$ is equivalent to determining which coset of $\Bs$ $A$ belongs to, which amounts to finding a $C\in \Cs$ such that $A-C\in\Bs$.

Consider a set of inputs $\xv=\{x_1,\dots,x_q\}\in\Xs^q$ and a vector $\xiv=\{\xi_1,\dots,\xi_q\}\in\hat{G}^q$.  For $O\in\Os$, let $O(\xv)=(O(x_1),\dots,O(x_q))\in G^q$ be the result of applying $O$ on each of the $x_i$.  For $\gv\in G^q$, let $\xiv(\gv)=\prod_{i=1}^q \xi_i(g_i)$.

Define $e_{\xv,\xiv}:\As\rightarrow\C^\times$ as $e_{\xv,\xiv}(A)=\xiv(A(\xv))=\prod_{i=1}^q \xi_i(A(x_i))$.  Notice that $e_{\xv,\xiv}$ is a character of $\As$: that is, a homomorphism from $\As$ into $\C^\times$.  Let $E_{\As,q}$ be the space of all functions $e_{\xv,\xiv}$.  Notice that there may be collisions in that $e_{\xv,\xiv}=e_{\xv',\xiv'}$ as functions while $(\xv,\xiv)\neq (\xv',\xiv')$ (for example, permuting the values in both $\xv$ and $\xiv$ by the same permutation yields the same function).  We will consider such $e_{\xv,\xiv}$, $e_{\xv',\xiv'}$ to be the same function.  In particular, this means that $|E_{\As,q}|$ will be smaller than the set of pairs $(\xv,\xiv)$.

The subgroup $\Bs$ induced by the homomorphism $f$ induces an equivalence relation $\equiv_f$ on $E_{\As,q}$, where $e\equiv_f e'$ if $e$ and $e'$ are identical on $\Bs$.  For a function $e\in E_{\As,q}$, let $E_{\As,q,f,e}=\{e'\in E_{\As,q}:e\equiv_f e'\}$ be the equivalence class induced by $f$ that $e$ belongs to.  Let $E_{\As,q,f}=\{E_{\As,q,f,e}\}$ be the set of equivalence classes.

\begin{theorem}\label{thm:main} The maximum success probability of any quantum algorithm at solving the quantum oracle classification problem for $(\As,f)$ given $q$ quantum queries is exactly \[P_{\As,q,f}\equiv\frac{1}{|\Cs|}\max_{e\in E_{\As,q}}|E_{\As,q,f,e}|=\frac{|\Bs|}{|\As|}\max_{e\in E_{\As,q}}|E_{\As,q,f,e}|\]
\end{theorem}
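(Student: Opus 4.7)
The plan is to establish a matching lower and upper bound on the success probability. Both sides exploit the phase query model: expanding any $q$-query algorithm by computation path, the final state on oracle $A$ has the form $|\Psi_A\rangle = \sum_{e\in E_{\As,q}} e(A)\,|\tilde\phi_e\rangle$ for fixed vectors $|\tilde\phi_e\rangle$ (independent of $A$), since each path with query sequence $(\xv,\xiv)$ contributes the phase $e_{\xv,\xiv}(A)$ and we aggregate by $e$. A key technical reduction is to assume WLOG that the $|\tilde\phi_e\rangle$ are pairwise orthogonal: after each phase query, insert a CNOT-like copy of the query register $(x,\xi)$ into a fresh ancilla. The modified algorithm uses the same number of queries, its optimal success probability is at least that of the original (the original's POVM can be simulated by ignoring the ancillas), and after the modification $|\tilde\phi_e\rangle$ and $|\tilde\phi_{e'}\rangle$ for $e\neq e'$ are sums over disjoint sets of computation paths whose ancilla values are pairwise orthogonal.

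For the upper bound, using orthogonality of characters on $\As$ together with the decomposition $A = C + B$ with $B\in\Bs$, and the fact that $e\equiv_f e'$ iff $e,e'$ agree on $\Bs$, one obtains
\[
p \;=\; \frac{|\Bs|}{|\As|}\sum_{[e]\in E_{\As,q,f}}\sum_{C\in\Cs}\langle\Phi_{[e],C}|\,\Pi_C\,|\Phi_{[e],C}\rangle,
\]
where $\{\Pi_C\}_{C\in\Cs}$ is the algorithm's final POVM and $|\Phi_{[e],C}\rangle := \sum_{e'\in [e]} e'(C)\,|\tilde\phi_{e'}\rangle$ lies in the subspace $V_{[e]} := \mathrm{span}\{|\tilde\phi_{e'}\rangle : e'\in [e]\}$ of dimension at most $|E_{\As,q,f,e}|$. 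Orthogonality of the $|\tilde\phi_{e'}\rangle$ yields $\|\Phi_{[e],C}\|^2 = \tau_{[e]} := \sum_{e'\in [e]}\|\tilde\phi_{e'}\|^2$, independent of $C$. Applying the Boneh--Zhandry rank method locally within $V_{[e]}$ --- namely, the sub-POVM $\tilde\Pi_C^{[e]} := P_{V_{[e]}}\Pi_C P_{V_{[e]}}$ satisfies $\sum_C \mathrm{tr}(\tilde\Pi_C^{[e]}) \le \mathrm{tr}(P_{V_{[e]}}) \le |E_{\As,q,f,e}|$ --- combined with $\mathrm{tr}(\tilde\Pi_C^{[e]}\rho_{[e],C}) \le \|\rho_{[e],C}\|_{\mathrm{op}}\,\mathrm{tr}(\tilde\Pi_C^{[e]}) = \tau_{[e]}\,\mathrm{tr}(\tilde\Pi_C^{[e]})$ gives $\sum_C \langle\Phi_{[e],C}|\Pi_C|\Phi_{[e],C}\rangle \le |E_{\As,q,f,e}|\cdot\tau_{[e]}$. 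Summing over $[e]$ with the Parseval-type normalization $\sum_{[e]}\tau_{[e]} = \sum_e\|\tilde\phi_e\|^2 = 1$ (obtained by averaging $\|\Psi_A\|^2 = 1$ over $A\in\As$) yields $p \le \frac{|\Bs|}{|\As|}\max_{e}|E_{\As,q,f,e}| = P_{\As,q,f}$.

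For the matching algorithm, let $[e^*]$ be an equivalence class achieving $k := \max_e|E_{\As,q,f,e}|$, and choose one representative pair $(\xv_e,\xiv_e)$ for each $e\in [e^*]$. Prepare the uniform superposition $\tfrac{1}{\sqrt{k}}\sum_{e\in[e^*]}|\xv_e,\xiv_e\rangle$ and apply $q$ parallel phase queries; the resulting state, for $A$ in the coset $C+\Bs$, equals (up to a global phase depending only on $B$) $|\Xi_C\rangle = \tfrac{1}{\sqrt{k}}\sum_{e\in[e^*]} e(C)|\xv_e,\xiv_e\rangle$. Since characters of $\As/\Bs$ separate distinct cosets, these $|\Cs|$ states form a group-covariant orbit in the $k$-dimensional span of the $|\xv_e,\xiv_e\rangle$ whose uniform mixture equals $\tfrac{1}{k}$ times the projector onto that span. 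The square-root measurement on this orbit is then the identity scaled by $k/|\Cs|$ on each $|\Xi_C\rangle\langle\Xi_C|$, yielding overall success probability exactly $k/|\Cs| = \frac{|\Bs|}{|\As|}\cdot k$, matching the upper bound.

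The principal technical obstacle is the orthogonality reduction. Without it, one can only bound $\|\Phi_{[e],C}\|^2 \le |[e]|\,\tau_{[e]}$ via Cauchy--Schwarz on the Gram matrix of $\{|\tilde\phi_{e'}\rangle\}_{e'\in[e]}$, which costs an extra factor of $\max_e|E_{\As,q,f,e}|$ in the overall estimate and destroys the tight characterization. The ancilla-tracking purification is precisely what makes $\|\Phi_{[e],C}\|^2$ constant in $C$ and lets the local rank argument return the exact value of the counting problem.
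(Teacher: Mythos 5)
Your overall framework for the upper bound is essentially the right one and parallels the paper's: expand $|\Psi_A\rangle = \sum_{e} e(A)|\tilde\phi_e\rangle$, use character orthogonality on $\Bs$ to block-diagonalize the success probability into equivalence classes $[e]$, and then apply a local rank argument within each class. The paper achieves the same block-diagonalization by a different gadget --- it modifies the algorithm to apply a random shift $D\in\Cs$ to the oracle, which attaches a register $|D\rangle$ to the final state and makes the resulting conditional pure states $|\phi_{C,e^*}\rangle$ have $C$-independent norms --- rather than by attempting to orthogonalize the path vectors. Your algorithm is also correct: picking one representative $(\xv_e,\xiv_e)$ per $e\in[e^*]$, applying parallel phase queries, and using the square-root (pretty-good) measurement on the group-covariant orbit $\{|\Xi_C\rangle\}$ yields exactly $|[e^*]|/|\Cs|$, matching the paper's PGM computation in different notation.

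The gap is in your ``WLOG orthogonality'' reduction, and you correctly flag it as the crux. The justification given --- ``the original's POVM can be simulated by ignoring the ancillas'' --- is false. If you trace out the path-recording ancillas from $|\Psi^{\mathrm{mod}}_A\rangle=\sum_P e_P(A)|\phi_P\rangle|P\rangle$, the reduced state is $\sum_P |\phi_P\rangle\langle\phi_P|$, which is \emph{independent of $A$} (the phases $|e_P(A)|^2=1$ kill all $A$-dependence). Applying the original POVM while ignoring the ancillas therefore yields success probability at most $1/|\Cs|$, i.e.\ random guessing, not the original $p$. Decohering the state in the path basis destroys the interference the original algorithm may rely on, so the claim ``modified optimal $\ge$ original optimal'' does not follow from ignoring ancillas; some genuinely different argument would be needed, and as written none is given.

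Fortunately the orthogonalization is not actually needed, so the flaw is local and repairable. In your identity
\[
p \;=\; \frac{|\Bs|}{|\As|}\sum_{[e]}\sum_{C\in\Cs}\langle\Phi_{[e],C}|\,\Pi_C\,|\Phi_{[e],C}\rangle,\qquad |\Phi_{[e],C}\rangle=\sum_{e'\in[e]}e'(C)\,|\tilde\phi_{e'}\rangle,
\]
write $|\Phi_{[e],C}\rangle = W\chi_C$ where $W:\C^{[e]}\to\Hs$ sends $|e'\rangle\mapsto|\tilde\phi_{e'}\rangle$ and $\chi_C=(e'(C))_{e'\in[e]}$ is a phase vector with $\|\chi_C\|^2=|[e]|$. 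Then with $M_C := W^\dagger \Pi_C W \succeq 0$,
\[
\langle\Phi_{[e],C}|\Pi_C|\Phi_{[e],C}\rangle \;=\; \chi_C^\dagger M_C \chi_C \;\le\; \|\chi_C\|^2\,\lambda_{\max}(M_C) \;\le\; |[e]|\cdot\mathrm{tr}(M_C),
\]
and $\sum_C \mathrm{tr}(M_C) = \mathrm{tr}\bigl(W^\dagger(\sum_C\Pi_C)W\bigr) \le \mathrm{tr}(W^\dagger W) = \tau_{[e]}$, giving $\sum_C\langle\Phi_{[e],C}|\Pi_C|\Phi_{[e],C}\rangle \le |[e]|\,\tau_{[e]}$ with no orthogonality assumption at all. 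Combined with $\sum_{[e]}\tau_{[e]}=1$ this yields the exact bound $p\le\frac{|\Bs|}{|\As|}\max_e|E_{\As,q,f,e}|$. So the correct fix is not to orthogonalize $\{|\tilde\phi_{e'}\rangle\}$, but to move the operator-norm bound from the (rank-one) density factor to the compressed POVM factor; once you do that, the rest of your argument is sound and gives the theorem.
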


Notice that $e_{\xv,\xiv}$ is indifferent to simultaneously permuting the coordinates of $\xv$ and $\xiv$.  Also note that  $e_{(x_1,\dots,x_{q-1},x_{q-1}),(\xi_1,\dots,\xi_{q-1},\xi_{q})}=e_{(x_1,\dots,x_{q-1},x_{q-1}),(\xi_1,\dots,\xi_{q-1}\cdot\xi_{q},1)}$ where $1$ is the character that maps all of $G$ to 1.  Therefore, when enumerating the $e_{\xv,\xiv}$, is suffices to consider $\xv$ whose elements are distinct, and in sorted order (according to some arbitrary ordering on $\Xs$).  Therefore, we will consider the vector $\xv$ as a \emph{set} of $q$ elements in $\Xs$.

\medskip

Before proving the theorem, we give an alternative formulation that is significantly easier to use.  This formulation loses some generality, but covers all of the cases considered in this work.  Decompose $G$ as $G=\Z_{N_1}\times\cdots\times \Z_{N_k}$.  Component-wise multiplication yields a ring structure on $G$.  Then the space $\Os$ of all oracles actually forms a $G$-module.  Suppose $\As,\Bs,\Cs$ are now actually submodules of $\Os$, and moreover suppose they are free submodules (that is, they have a basis).

It is then straightforward to show that a character $\xi$ on $G$ is a map of the following form: \[g=(g_1,\dots,g_k)\mapsto \omega_{N_1}^{r_1 g_1}\omega_{N_2}^{r_2 g_2}\cdots \omega_{N_k}^{r_k g_k}=\omegav^{r g}\]
where $\omegav=(\omega_{N_1},\dots,\omega_{N_k})$ is the vector of primitive $N_k$th roots of unity, $r=(r_1,\dots,r_k)\in G$ is some element in $G$. $r$ then completely describes the character $\xi$.  Now, using the assumption that $\Bs$ has basis $B_1,\dots,B_s$ and $\Cs$ has basis $C_1,\dots,C_t$, any element $A$ in $\As$ can be written as $A=\sum_{\ell=1}^s\beta_\ell B_\ell+\sum_{m=1}^t\gamma_m C_m$.  We see that

\begin{align*}e_{\xv,\xiv}(\sum_{\ell=1}^s\beta_\ell B_\ell+\sum_{m=1}^t\gamma_m C_m)&=\left(\prod_{i=1}^q \prod_{\ell=1}^s\omegav^{r_i B_\ell(x_i) \beta_\ell}\right)\left(\prod_{i=1}^q \prod_{m=1}^t\omegav^{r_i C_m(x_i) \gamma_m}\right)\\&=\left(\prod_{\ell=1}^s\omegav^{\langle \rv,B_\ell(\xv)\rangle\beta_\ell}\right)\left(\prod_{m=1}^t\omegav^{\langle \rv,C_m(\xv)\rangle\gamma_m}\right)\end{align*}

where $\rv=(r_1,\dots,r_q)$ is the vector of $r_i$'s that define the $\xi_i$'s.  The quantities $h_\ell=\langle \rv,B_\ell(\xv)\rangle$ and $z_m=\langle \rv,C_m(\xv)\rangle$ as $\ell,m$ vary characterize the action of $e_{\xv,\xiv}$ on $\As$, in that there is a bijective correspondence between the vector pairs $\hv=(h_1,\dots,h_s),\zv=(z_1,\dots,z_t)$ and the $e_{\xv,\xiv}$.  Moreover, $\hv$ characterizes the action on $\Bs$.  Therefore, if $e$ has vectors $\hv,\zv$ and $e'$ has vectors $\hv',\zv'$, the $e\equiv_f e'$ if and only if $\hv=\hv'$.

Lastly, notice that we can write

\[\hv=\left(\begin{array}{cccc}
B_1(x_1)& B_1(x_2)& \cdots & B_1(x_q)\\
B_2(x_1)& B_2(x_2)& \cdots & B_2(x_q)\\
\vdots & \vdots & \ddots & \vdots \\
B_r(x_1)& B_r(x_2)& \cdots & B_r(x_q)
\end{array}\right)\cdot \rv=\Bm(\xv)\cdot \rv\]

We can also conclude that $\zv=\Cm(\xv)\cdot\rv$, where $\Cm(\xv)$ is defined analogously to $\Bm(\xv)$.  This gives rise to the following corollary:

\begin{corollary}\label{cor:main}Given module $\As$, submodule $\Bs$, and quotient module $\Cs$, let $\Bs$ be spanned by functions $B_1,\dots,B_r$ and $\Cs$ be spanned by $C_1,\dots,C_s$.  Let \[\Bm(\xv)=\left(\begin{array}{cccc}
	B_1(x_1)& B_1(x_2)& \cdots & B_1(x_q)\\
	B_2(x_1)& B_2(x_2)& \cdots & B_2(x_q)\\
	\vdots & \vdots & \ddots & \vdots \\
	B_r(x_1)& B_r(x_2)& \cdots & B_r(x_q)
	\end{array}\right)\;\;\;\;\;\;\;\;\;\;\Cm(\xv)=\left(\begin{array}{cccc}
	C_1(x_1)& C_1(x_2)& \cdots & C_1(x_q)\\
	C_2(x_1)& C_2(x_2)& \cdots & C_2(x_q)\\
	\vdots & \vdots & \ddots & \vdots \\
	C_s(x_1)& C_s(x_2)& \cdots & C_s(x_q)
	\end{array}\right)\]
	
	Then the maximum success probability of any algorithm for the QOC problem on $\As,f$ given $q$ queries is exactly  \[\frac{1}{|\Cs|}\max_{\hv}|\{\zv:\exists \xv,\rv,\; \zv=\Cm(\xv),\; \Bm(\xv)\cdot\rv=\hv\}|\]
\end{corollary}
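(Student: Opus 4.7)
The plan is to derive the corollary directly from Theorem~\ref{thm:main} by unfolding the parametrization of $E_{\As,q}$ set up in the paragraph immediately before the statement. Theorem~\ref{thm:main} already gives the optimal probability as $\frac{1}{|\Cs|}\max_{e\in E_{\As,q}}|E_{\As,q,f,e}|$, so the task reduces to re-expressing both $E_{\As,q}$ and the equivalence classes $E_{\As,q,f,e}$ as counting problems on the matrices $\Bm(\xv),\Cm(\xv)$.

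First I would note that since $G=\Z_{N_1}\times\cdots\times\Z_{N_k}$, every character $\xi$ of $G$ is of the form $g\mapsto \omegav^{rg}$ for a unique $r\in G$, so giving $\xiv\in\hat{G}^q$ is the same as giving $\rv\in G^q$. Next I would use the free basis assumption to write any $A\in\As$ uniquely as $A=\sum_\ell \beta_\ell B_\ell+\sum_m \gamma_m C_m$ with $\beta_\ell,\gamma_m$ ranging over the coordinate rings, and substitute into $e_{\xv,\xiv}(A)$. This is exactly the computation already displayed in the excerpt, and it yields
\[
 e_{\xv,\xiv}(A)\;=\;\omegav^{\langle \hv,\betav\rangle+\langle \zv,\gammav\rangle}, \qquad \hv=\Bm(\xv)\rv,\;\;\zv=\Cm(\xv)\rv.
\]

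With this identity in hand, I would establish two claims. (a) The pair $(\hv,\zv)$ determines $e_{\xv,\xiv}$ as a character of $\As$, and conversely distinct pairs give distinct characters: the freeness of $\As=\Bs\oplus\Cs$ lets $(\betav,\gammav)$ range independently over all coordinate vectors, so two pairs $(\hv,\zv)\neq(\hv',\zv')$ produce characters that disagree on some $A$. This sets up a bijection between $E_{\As,q}$ and the set of attainable pairs $\{(\hv,\zv):\exists\,\xv,\rv\text{ with }\hv=\Bm(\xv)\rv,\,\zv=\Cm(\xv)\rv\}$. (b) Restricting to $\Bs$ kills the $\gamma_m$ contribution, leaving $e_{\xv,\xiv}(B)=\omegav^{\langle \hv,\betav\rangle}$ for $B=\sum_\ell\beta_\ell B_\ell$; hence $e\equiv_f e'$ iff their $\hv$-components agree.

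Putting (a) and (b) together, the equivalence class of $e$ under $\equiv_f$ is in bijection with the set of $\zv$ values attainable simultaneously with its fixed $\hv$, namely $\{\zv:\exists\,\xv,\rv,\;\zv=\Cm(\xv)\rv,\;\Bm(\xv)\rv=\hv\}$. Maximizing over $e$ is then the same as maximizing over $\hv$ in the attainable range, and substituting into Theorem~\ref{thm:main} yields the stated formula. The only subtle step is claim~(a), where one must use the free-basis hypothesis to convert ``equality as functions on $\As$'' into ``equality of the exponent vectors $(\hv,\zv)$''; everything else is bookkeeping around the bilinear identity $\hv=\Bm(\xv)\rv$, $\zv=\Cm(\xv)\rv$.
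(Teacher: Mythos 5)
Your proof is correct and follows essentially the same route as the paper: unfold $E_{\As,q}$ via the free-module decomposition $A=\sum_\ell\beta_\ell B_\ell+\sum_m\gamma_m C_m$ and the duality pairing $\omegav^{rg}$, identify each $e_{\xv,\xiv}$ with its pair $(\hv,\zv)=(\Bm(\xv)\rv,\Cm(\xv)\rv)$, and observe that $\equiv_f$ is precisely agreement in the $\hv$-component. The paper carries out exactly this unfolding in the paragraph preceding the corollary rather than in a displayed proof, and your claim~(a) makes explicit what the paper states more tersely as ``a bijective correspondence between the vector pairs $\hv,\zv$ and the $e_{\xv,\xiv}$.''
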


Therefore, understanding the maximal success probability amounts to understanding the matrices $\Bm(\xv),\Cm(\xv)$ and the spaces spanned by them.
\medskip

In Section~\ref{sec:apps}, we will the version of Theorem~\ref{thm:main} given in Corollary~\ref{cor:main} for several applications.  Next, we prove Theorem~\ref{thm:main}.

\subsection{The Algorithm}

Fix a function $e^*\in E_{\As,q}$.  We will show how to build a quantum query algorithm achieving success probability $\frac{|E_{\As,q,f,e^*}|}{|\Cs|}$.  By choosing the optimal $e^*$, we obtain an algorithm with success probability $P_{\As,q,f}$.  Let $S_{\xv,\xiv}$ denote the number of $(\xv',\xiv')$ pairs such that $e_{\xv',\xiv'}=e_{\xv,\xiv}$.  

First, the algorithm will construct the superposition \[\frac{1}{|E_{\As,q,f,e^*}|}\sum_{\xv,\xiv:\;e_{\xv,\xiv}\;\equiv_f\; e^*}\frac{1}{\sqrt{S_{\xv,\xiv}}}|\xv,\xiv\rangle\]

It is straightforward that this superposition is properly normalized.

Next, make $q$ queries in parallel on the superposition.  For an oracle $A=B+C$ where $B\in\Bs,C\in\Cs$, the resulting state is:
\[\frac{1}{\sqrt{|E_{\As,q,f,e^*}|}}\sum_{\xv,\xiv:\;e_{\xv,\xiv}\;\equiv_f \;e^*}\frac{\xiv(B(\xv))\xiv(C(\xv))|\xv,\xiv\rangle}{\sqrt{S_{\xv,\xiv}}}=\frac{e^*(B)}{\sqrt{|E_{\As,q,f,e^*}}}\sum_{\xv,\xiv:\;e_{\xv,\xiv}\;\equiv_f \;e^*}\frac{e_{\xv,\xiv}(C)|\xv,\xiv\rangle}{\sqrt{S_{\xv,\xiv}}}\]

Notice that the state is independent of $B$ except for an overall phase factor $e^*(B)$.  Denote this state as $e^*(B)|\psi_C\rangle$.  Therefore, if we trace out the oracle $B$, the phase disappears and we get a pure state $|\psi_C\rangle$.  Therefore, our algorithm needs to distinguish the states $|\psi_C\rangle$.

Let $T$ be the matrix whose column vectors are the $|\psi_C\rangle$.  Now we examine the entries of $U=T^\dagger\cdot T$, which consist of the inner products $\langle \psi_{C'} | \psi_{C}\rangle$.  It is straightforward to show that 
\[U_{C',C}=\langle \psi_{C'} | \psi_{C}\rangle=\frac{1}{|E_{\As,q,f,e^*}|}\sum_{e\in E_{\As,q,f,e^*}}e(C-C')\]

Assume for the moment that $U$ is full rank.  We will measure in the basis $R=(R_C)_{c\in\Cs}=T\cdot U^{-1/2}$, which can be verified to be a basis.  The probability that we measure $C$ is $|\langle R_C,\psi_C\rangle|^2$.  Averaging over all $C$, the probability that we guess $C$ is
\begin{equation}\frac{1}{|\Cs|}\sum_C |\langle R_C,\psi_C\rangle|^2=\frac{1}{|\Cs|}\Tr^2(R^\dagger\cdot T)=\frac{1}{|\Cs|}\Tr^2(U^{1/2})\label{eq:1}\end{equation}
where $\Tr^2$ is the sum of squares of diagonal entries.  

Now, if $U$ is not full rank, we can use a pseudoinverse in constructing $R$, and it is straightforward to show that Equation~\ref{eq:1} still holds.

Next, we claim that $U^2_{C',C}=\frac{|\Cs|}{|E_{\As,q,f,e^*}|}U_{C',C}$.  Indeed,
\begin{align*}U^2_{C',C}&= \sum_{C''}U_{C',C''}U_{C'',C} = \frac{1}{|E_{\As,q,f,e^*}|^2}\sum_{e,e'\in E_{\As,q,f,e^*},C''}e(C-C'')e'(C''-C')\\
&=\frac{1}{|E_{\As,q,f,e^*}|^2}\sum_{e,e'\in E_{\As,q,f,e^*}}e(C)e'(-C')\sum_{C''\in\Cs}(e'/e)(C'')
\end{align*}

Next, $e'/e$, when restricted to $\Cs$, is a character of $\Cs$, and if $e\neq e'$ when restricted to $\Cs$, then this character is not identically 1.  This means that if $e\neq e'$ on $\Cs$, then the sum $\sum_{C''}(e'/e)(C'')$ will vanish.  Otherwise, if they are equal on $\Cs$, the sum will become $|\Cs|$.  Therefore, $U^2_{C',C}$ becomes
\[U^2_{C',C}=\frac{|\Cs|}{|E_{\As,q,f,e^*}|^2}\sum_{e\in E_{\As,q,f,e^*}}e(C-C')=\frac{|\Cs|}{|E_{\As,q,f,e^*}|}U_{C',C}\]
as desired.  

Now we take the square root of both sides, obtaining $U^{1/2}_{C',C}=\frac{\sqrt{|E_{\As,q,f,e^*}|}}{\sqrt{|\Cs|}}U_{C',C}$.  Notice that the diagonal elements of this quantity are all $\frac{\sqrt{|E_{\As,q,f,e^*}}}{\sqrt{|C|}}$ since $U_{C,C}=1$ due to normalization.  Squaring and summing over all $C$, we obtain that $\Tr^2(U^{1/2})=|E_{\As,q,f,e^*}|$.  This gives the desired success probability of $|E_{\As,q,f,e^*}|/|\Cs|$.

\subsection{The Lower Bound}

Now we prove our lower bound (that is, our upper bound on the success probability).  At a high level, our result will use the Rank Method of Boneh and Zhandry~\cite{EC:BonZha13}.  However, the rank of an algorithm in our case is too high: the rank grows with the number of oracles ($|\As|$), but we are trying to identify an oracle in a potentially much smaller set ($|\Cs|$).  Therefore, we will have to be careful in our application of this method.

Consider a general algorithm $\Qs$, and suppose that $\Qs$ has probability $\epsilon$ in solving the oracle classification problem for $\As,f$.  Let $\Bs,\Cs$ be the subgroups of $\As$ induced by $f$.  Let $A\in \As$ be the oracle seen by $\Qs$, and write $A=B+C$ for oracles $B\in\Bs,C\in\Cs$.  The goal of $\Qs$ is to determine $C$.  To that end, we will consider the final density matrices of $\Qs$, denoted $\rho_C$, obtained by fixing $C$, but letting $B$ vary.  Our goal is to bound the the distinguishability of the density matrices $\rho_C$.  Unfortunately, beyond distinguishing two density matrices, we do no know of any general solution that, given a set of density matrices, determines how distinguishable they are.  Instead, we will have to use particular properties of the $\rho_C$ to argue indistinguishability.

It is straightforward to show that the state $|\psi_A\rangle$ obtained by running $\Qs$ on oracle $A$, but stopping just before the final measurement, can be written as \[|\psi_A\rangle=\sum_{\xv\in\Xs^q,\xiv\in\hat{G}^q,z\in\Zs}U_{\xv,\xiv,z}\xiv(A(\xv))|z\rangle\]
Here, the quantities $U_{\xv,\xiv,z}$ are determined solely by the algorithm $\Qs$ and independent of $A$, and $\Zs$ is some set: $z\in\Zs$ encodes the output $C'\in\Cs$, as well as some auxiliary information that is discarded.

As $\xiv(A(\xv))=e_{\xv,\xiv}(A)$ and $A=B+C$, we can write this as 

\[|\psi_A\rangle=\sum_{\xv,\xiv,z}U_{\xv,\xiv,z}e_{\xv,\xiv}(B)e_{\xv,\xiv}(C)|z\rangle\]

We will now modify $\Qs$ to obtain $\Qs'$ that will have the same success probability as $\Qs$.  $\Qs'$ chooses a random $D\in\Cs$, and simulates $\Qs$ with oracle access to $A'=A+D=B+C+D$, by forwarding the queries $\Qs$ makes to its own oracle $A$, and then performing a $D$ oracle query itself.  Since $\Qs'$ created $D$, making the $D$ oracle query does not cost $\Qs'$ any queries to $A$.  $\Qs'$ simply outputs the $z$ outputted by $\Qs$, as well as $D$.  $\Qs$ still sees a random oracle drawn from $\As$, so with probability $\epsilon$ its output $z$ will encode correct coset $C+D$.  From this and knowledge of $D$, we can recover $C$.  Therefore, $\Qs'$ still succeeds with probability $\epsilon$.

We can write the final state of this modified algorithm as

\[|\psi'_A\rangle=\frac{1}{\sqrt{|\Cs|}}\sum_{\xv,\xiv,z,D}U_{\xv,\xiv,z}e_{\xv,\xiv}(B)e_{\xv,\xiv}(C)e_{\xv,\xiv}(D)|z,D\rangle\]

The density matrix $\rho_C$ obtained by taking a random sample from $|\psi'_{B+C}\rangle$ as $B$ varies is then:

\[\rho_C=\frac{1}{|\Cs|}\sum_{B,\xv,\xiv,z,D,\xv',\xiv',z',D'}U_{\xv,\xiv,z}U^\dagger_{\xv',\xiv',z'}e_{\xv,\xiv}(B)e_{\xv,\xiv}(C+D)e_{\xv',\xiv'}(-B)e_{\xv',\xiv'}(-C-D')|z,D\rangle\langle z',D'|\]

Now we isolate $\sum_B e_{\xv,\xiv}(B)e_{\xv',\xiv'}(-B)=\sum_B(e_{\xv,\xiv}/e_{\xv',\xiv'})(B)$.  If $e_{\xv,\xiv}\neq e_{\xv',\xiv'}$ when restricted to $\Bs$, then $(e_{\xv,\xiv}/e_{\xv',\xiv'})$ is a character of $B$ that is not identically 1.  In this case, the sum goes to 0.  If $e_{\xv,\xiv}= e_{\xv',\xiv'}$ when restricted to $\Bs$, then the sum goes to $|\Bs|$.  This has the effect of forcing $e_{\xv,\xiv}\;\equiv_f\;e_{\xv',\xiv'}$ in the expression for $\rho_C$.  Therefore, we can write 

\[\rho_C=\sum_{e^*\in E_{\As,q,f}}p_{e^*}|\phi_{C,e^*}\rangle\langle\phi_{C,e^*}|\]
	
Where $\sum_{e^*\in E_{\As,q,f}}$ means that exactly one $e^*$ is chosen arbitrarily from each equivalence class on $E_{\As,q}$ induced by $\equiv_f$, and where
\begin{align*}
	p_{e^*}&=\frac{|\Bs|}{|\Cs|}\sum_{z,D}\left|\sum_{\xv,\xiv:\; e_{\xv,\xiv}\;\equiv_f\; e^*}U_{\xv,\xiv,z}e_{\xv,\xiv}(C+D)\right|^2=\frac{1}{|\Cs|}\sum_{z,D}\left|\sum_{\xv,\xiv:\;e_{\xv,\xiv}\;\equiv_f\;e^*}U_{\xv,\xiv,z}e_{\xv,\xiv}(D)\right|^2\\
	|\phi_{C,e^*}\rangle&=\frac{1}{\sqrt{p_{e^*}}}\sqrt{\frac{|\Bs|}{|\Cs|}}\sum_{z,D,\xv,\xiv:\;e_{\xv,\xiv}\;\equiv_f\;e^*}U_{\xv,\xiv,z}e_{\xv,\xiv}(C+D)|z,D\rangle\\
\end{align*}

Notice that $p_e^*$ does not depend on $C$.  Also, $\sum_{e^*} p_{e^*} = 1$ --- while it is not obvious given the expressions above that this should hold, it follows easily from the fact that the density matrix has trace 1 and the $|\phi_{C,e^*}\rangle$ are properly normalized.

\medskip

We can therefore think of the algorithm $\Qs'$ as acting in a different model where it gains access to $C$ in the following way: $\Qs'$ is simply given $e^*$\footnote{Actually, providing $e^*$ to $\Qs'$ gives $\Qs'$ potentially \emph{more} power in determining $C$.  This can only make the lower bound harder.  In particular, proving a lower bound for $\Qs'$ that is given $e^*$ implies a lower bound for $\Qs$ that is \emph{not} given $e^*$}, sampled with probability $p_{e^*}$, as well as the pure state $|\phi_{C,e^*}\rangle$, and its goal is to determine $C$.  We now bound the probability that it can do so.  

The Rank method of Boneh and Zhandry~\cite{EC:BonZha13} then gives us the following.  Conditioned on receiving $e^*$, which is independent of $C$, the maximum success probability is at most $\frac{1}{|\Cs|}$ times the dimension of the space spanned by the vectors $|\phi_{C,e^*}\rangle$ for various $C$.

Fix some $e^*$.  Notice that we can write $|\phi_{C,e^*}\rangle$ as:
\[|\phi_{C,e^*}\rangle=\sum_{e\in E_{\As,q,f,e^*}:\;e\;\equiv_f\;e^*}e(C)|\xi_e^{(e^*)}\rangle\]
Where \[
|\xi_e^{(e^*)}\rangle=\frac{1}{\sqrt{p_{e^*}}}\sqrt{\frac{|\Bs|}{|\Cs|}}\sum_{z,D}\left(\sum_{\xv,\xiv:\;e_{\xv,\xiv}=e}U_{\xv,\xiv,z}e(D)\right)|z,D\rangle\]

Therefore, for each $C$, $|\phi_{C,e^*}\rangle$ is linear combination of the basis functions $|\xi_{e}^{(e^*)}\rangle$ for $e\in E_{\As,q,f,e^*}$.  Therefore, the space spanned by the $|\phi_{C,e^*}\rangle$ as $C$ varies (but $e^*$ remains fixed) has dimension at most $|E_{\As,q,f,e^*}|$.  Therefore, the success probability, conditioned on receiving $e^*$, is at most $|E_{\As,q,f,e^*}|/|\Cs|$.  The overall success probability is at most the maximum of this quantity as we vary $e^*$, or 
\[\frac{1}{|\Cs|}\max_{e^*\in E_{\As,q}}|E_{\As,q,f,e^*}|=P_{\As,q,f}\]
This completes the proof.

\section{Applications}

\label{sec:apps}

\subsection{Parity}

The parity problem asks, given an oracle $A:[0,M-1]\rightarrow G$, to compute $M(G)=\sum_{x\in\Xs}A(x)$.

\begin{theorem}\label{thm:parity} The maximum success probability of any $q$-query quantum algorithm in the Parity problem is \[\min\left(\frac{\lfloor M/(M-q)\rfloor}{|G|},1\right)\]
\end{theorem}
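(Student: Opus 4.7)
The plan is to apply Theorem~\ref{thm:main} directly to the parity instance $\As = \Os = G^{[0,M-1]}$ with homomorphism $f(A)=\sum_x A(x)$, whose kernel $\Bs$ satisfies $|\Bs|=|G|^{M-1}$ and $|\Cs|=|G|$. The cleanest way to enumerate $E_{\As,q}$ here is to identify each character $e_{\xv,\xiv}$ with a single vector $c \in G^M$ of Hamming weight at most $q$, defined by placing the ``frequency'' $r_j$ of $\xi_j$ (in the sense of Corollary~\ref{cor:main}) in coordinate $x_j$ and $0$ elsewhere; then $e_{\xv,\xiv}(A) = \omegav^{\langle c, A\rangle}$, and distinct $c$ give distinct characters of $\As$ since the pairing $G^M \times G^M \to G$ is nondegenerate. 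Thus $E_{\As,q}$ is exactly the set of weight-$\leq q$ vectors in $G^M$.

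Next I would pin down the equivalence relation $\equiv_f$. Two vectors $c,c'$ agree as characters on $\Bs$ iff $\langle c-c', A\rangle = 0$ for every $A \in G^M$ with $\sum_x A(x)=0$; testing against differences of coordinate vectors forces $c-c'$ to be constant across all coordinates, so $c \equiv_f c'$ iff $c' = c + k\mathbf{1}$ for some $k \in G$. Consequently, for any reference $e^* \in E_{\As,q}$ with multiplicity profile $n_v = |\{x : e^*_x = v\}|$, the class size $|E_{\As,q,f,e^*}|$ equals the number of shifts $k \in G$ for which $e^* + k\mathbf{1}$ has weight at most $q$, i.e.\ for which at least $M-q$ coordinates of $e^*$ equal $-k$; equivalently, $|E_{\As,q,f,e^*}| = |\{v \in G : n_v \geq M-q\}|$.

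It remains to maximize this count over multisets $(n_v)_v$ with $\sum_v n_v = M$, which is elementary: any $\ell$ values with $n_v \geq M-q$ contribute at least $\ell(M-q)$ to the sum $M$, so $\ell \leq \lfloor M/(M-q)\rfloor$, and trivially $\ell \leq |G|$. A matching construction picks $\ell = \min(\lfloor M/(M-q)\rfloor,|G|)$ distinct values of $G$ and distributes the $M$ entries into buckets of sizes $M-q, \ldots, M-q, M-(\ell-1)(M-q)$; the final bucket is still of size at least $M-q$ by the choice of $\ell$. Dividing by $|\Cs|=|G|$ and invoking Theorem~\ref{thm:main} yields the claimed bound $\min(\lfloor M/(M-q)\rfloor/|G|,\,1)$. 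The only genuinely non-routine step is the equivalence-class computation --- verifying, in the general (not necessarily cyclic) abelian setting, that the annihilator of $\Bs$ under the $G$-valued pairing is exactly the diagonal $G\cdot\mathbf{1}$ --- after which the problem collapses to the one-line pigeonhole above.
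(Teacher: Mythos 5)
Your proof is correct and arrives at the same count by essentially the same route the paper takes (apply the main theorem, compute $\max_{e^*}|E_{\As,q,f,e^*}|$, pigeonhole). The one genuine organizational difference is that you work with Theorem~\ref{thm:main} directly, identifying $E_{\As,q}$ with weight-$\le q$ vectors in $G^M$ and observing that $\equiv_f$ is precisely translation by $G\cdot\mathbf{1}$ (the annihilator of $\Bs$), which makes the class size $|\{v\in G: n_v\ge M-q\}|$ and the pigeonhole a one-liner. The paper instead invokes Corollary~\ref{cor:main} with a specific (and somewhat lopsided) basis $\{B_y\}_{y\ne 0}$ for $\Bs$ and $C_0$ for $\Cs$, and then has to split into the cases $0\in\xv$ versus $0\notin\xv$ and treat the value $z=0$ asymmetrically (needing $M-1-q$ zeros of $\hv$ rather than $M-q$); that asymmetry is exactly the artifact your diagonal-shift formulation eliminates. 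Both give $k(M-q)\le M$ and the same final bound, so the difference is cosmetic rather than substantive, but your framing is the cleaner one.
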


We make the following observations:
\begin{itemize}
	\item In the case $|G|=2$, we obtain that the success probability is $\frac{1}{2}\lfloor M/(M-q)\rfloor$.  For $q<M/2$, this is equal to $1/2$, meaning the best quantum algorithm cannot beat random guessing.  Meanwhile, for $q\geq M/2$, the success probability becomes 1, meaning the parity can be computed with certainty.  This re-establishes the known bounds for the parity function due to Farhi et al.~\cite{FGGS98} and Beals et al.~\cite{BBCMdW01}.
	\item For more general $G$, the success probability is piecewise constant as $q$ varies, an increases by $1/|G|$ each time $q$ crosses $\frac{\ell-1}{\ell}M$ for integers $M$.
	\item To achieve success probability 1, $q\geq \lceil M(1-\frac{1}{|G|})\rceil$.  In particular, for $M<|G|$, $q=M$ queries are required to achieve a perfect algorithm.
	\item Setting $q=M-1$, the maximum success probability is $M/|G|$.  Thus for quantum algorithms to be able to save even a single query relative to classical algorithms in the bounded error setting, it must be that $M=\Omega(|G|)$.
	\item If error $p$ is tolerated, then approximately $q\approx (1-\frac{1}{p|G|})M$ queries are required.  For exponential $|G|$ and non-negligible $p$, all but a negligible fraction $M$ queries must be made.  Therefore, the advantage over classical algorithms is negligible.
\end{itemize}

Now we prove Theorem~\ref{thm:parity}.  Let $\Bs$ be the kernel of $f$: that is, the set of $A$ such that $\sum_{x\in\Xs}A(x)=0$.  The quotient group can be taken to be $\Cs$, the set of functions $A$ such that $A(x)=0$ for all $x\neq 0$.

Decompose $G$ as $G=\Z_{N_1}\times\cdots\times \Z_{N_k}$, and make $G$ a ring using component-wise multiplication.  The unit in this ring is the all-1s vector.  Notice that $\Bs$ and $\Cs$ are free modules over this ring structure.  For $\Bs$, use the basis $\{B_y\}_{y\in\Xs\setminus \{0\}}$ where \[B_y(x)=\begin{cases}
-1&\text{if }x=0\\
1&\text{if }x=y\\
0&\text{otherwise}
\end{cases}\]

For $\Cs$ we use the basis consisting of $C_0(x)=\begin{cases}1&\text{if }x=0\\0&\text{otherwise}\end{cases}$.  Let $\Bm(\xv)$ be the $(M-1)\times q$ matrix 
\[\Bm(\xv)=\left(\begin{array}{cccc}
B_1(x_1)     & B_1(x_2)     & \cdots & B_1(x_q)\\
B_2(x_1)     & B_2(x_2)     & \cdots & B_2(x_q)\\
\vdots       & \vdots       & \ddots & \vdots  \\
B_{M-1}(x_1) & B_{M-1}(x_2) & \cdots & B_{M-1}(x_q)
\end{array}\right)\]

Fix some $\hv$ in $G^{M-1}$.  We need to consider the set of $(\xv,\rv)$ such that $\Bm(\xv)\cdot\rv=\hv$.  Consider some $\xv$.  There are two cases:
\begin{itemize}
	\item $\xv$ does not contain 0.  In this case, the only rows of $\Bm(\xv)$ that are non-zero are those corresponding to the elements in $\xv$, of which there are $q$ (recall that we can assume the elements in $\xv$ are distinct).  Therefore, the column span of this matrix contains vectors that are zero everywhere outside positions corresponding to the elements of $\xv$.  In particular, if $\hv$ is in the image of $\Bm(\xv)$, then $\hv$ must be zero on at least $M-1-q$ points.
	\item $\xv$ contains 0.  Since we assume $\xv$ is ordered, $x_1=0$.  Then $\Bm(\xv)$ will be $-1$ everywhere in the first column.  The remaining columns will have exactly a single 1 in them.  Therefore, $\Bm(\xv)\cdot\rv$ will be the vector that is $-r_1$ everywhere, plus a vector that is non-zero in at most $q-1$ positions.  Thus if $\hv=\Bm(\xv)\cdot\rv$, it must be that $\hv$ is equal to $-r_1$ on at least $M-1-(q-1)=M-q$ points.
\end{itemize}

In either case, $\Bm(\xv)$ is full rank (since we assume the elements of $\xv$ are distinct).  Therefore, for any $\xv$, there is at most one $\rv$.

Next, we consider values of $z=C_0(\xv)\cdot \rv$ for values $(\xv,\rv)$ satisfying the above.  There are two cases:
\begin{itemize}
	\item $\xv$ does not contain 0.  In this case,  $C_0(\xv)$ is zero everywhere, so $z=0$, regardless of $\rv$
	\item $\xv$ contains 0 (so $x_1=0$).  Then $z=r_1$.  
\end{itemize}

Therefore, for every non-zero possible value of $z$, at least $M-q$ points of $\hv$ are equal to (the negative of) that value.  Moreover if $0$ is a possible value of $z$, $\hv$ is zero on at least $M-1-q$ points.  Therefore, if there are $k$ distinct possible values of $z$, it must be that \[M-1\geq (k-1)(M-q)+M-q-1\Longleftrightarrow k\leq\frac{M}{M-q}\]
Therefore $k=\lfloor M/(M-q)\rfloor$ is an upper bound on the number of possible $k$ values.  It is also straightforward to show that this bound is attainable, as long as the bound is at most 1.  Since $|\Cs|=|G|$, we have that the best probability of success of any $q$-query algorithm is:
\[\min\left(\frac{\lfloor M/(M-q)\rfloor}{|G|},1\right)\]

\subsection{Oracle Interrogation}

The oracle interrogation problem asks, given an oracle $A:\Xs\rightarrow G$, to output $k$ distinct input/output pairs given only $q$ quantum queries, where $q<k$.  Let $M=|\Xs|$ and $N=|G|$.

\begin{theorem}\label{thm:interrogate} The maximum success probability of any $q$ query quantum algorithm in the Oracle Interrogation problem is:
	\[\frac{1}{N^k}\sum_{i=0}^q\binom{M}{i}(N-1)^i\]
\end{theorem}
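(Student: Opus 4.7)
The plan is to apply Corollary~\ref{cor:main}. By Lemma~\ref{lem:adversarial}, the adversarial choice of the $k$ output pairs reduces to the case of a fixed target set; fix distinct $x_1^*,\dots,x_k^*\in\Xs$ and let $f: A \mapsto (A(x_1^*),\dots,A(x_k^*))$. Take $\As$ to be the full module of all functions $\Xs\to G$, let $\Bs = \ker f$ be the submodule of functions vanishing on $\{x_1^*,\dots,x_k^*\}$ with free basis $\{B_y\}_{y\in\Xs\setminus\{x_1^*,\dots,x_k^*\}}$ of indicator functions, and let $\Cs$ be the complement of functions supported on $\{x_1^*,\dots,x_k^*\}$ with basis $\{C_i\}_{i=1}^{k}$ of indicator functions. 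Then $|\Cs|=N^k$, matching the $1/N^k$ prefactor.

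Next I unpack $\Bm(\xv)$ and $\Cm(\xv)$. For $\xv=(x_1,\dots,x_q)$ with distinct coordinates (WLOG, per the remark after Corollary~\ref{cor:main}), the entry of $\Bm(\xv)\cdot\rv$ at position $y\notin\{x_i^*\}$ equals $r_j$ if some $x_j=y$ and $0$ otherwise, and $\Cm(\xv)\cdot\rv$ behaves analogously on $\{x_1^*,\dots,x_k^*\}$. Packaging $\hv$ and $\zv$ into a single function $F:\Xs\to G$ with $\hv = F|_{\Xs\setminus\{x_i^*\}}$ and $\zv = F|_{\{x_i^*\}}$, the correspondence $(\xv,\rv)\mapsto F$ ranges exactly over the set of functions $F$ with $|\mathrm{supp}(F)|\leq q$, and distinct $(\xv,\rv)$ (modulo permuting coordinates) produce distinct $F$.

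Finally, I carry out the maximization in Corollary~\ref{cor:main}. For fixed $\hv$ with $s := |\mathrm{supp}(\hv)|$, the admissible $\zv$ are precisely those $\zv\in G^k$ with $|\mathrm{supp}(\zv)|\leq q-s$ (since the supports of $\hv$ and $\zv$ live in disjoint parts of $\Xs$ and sum to $|\mathrm{supp}(F)|$), of which there are $\sum_{i=0}^{\min(k,\,q-s)}\binom{k}{i}(N-1)^i$. This count is manifestly maximized at $s = 0$, i.e.\ $\hv = 0$, and since $q<k$ the maximum equals $\sum_{i=0}^{q}\binom{k}{i}(N-1)^i$. Dividing by $|\Cs|=N^k$ yields the claimed probability. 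The only nontrivial step is the combinatorial identification in the second paragraph, which turns the matrix equation $\Bm(\xv)\cdot\rv = \hv$ into simply specifying a function of support at most $q$; once this is done, both the counting and the optimization over $\hv$ are immediate.
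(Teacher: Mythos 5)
Your proof is correct and follows essentially the same route as the paper: phrase Oracle Interrogation as a group QOC with the indicator-function bases for $\Bs$ (supported off $S$) and $\Cs$ (supported on $S$), observe that the pair $(\hv,\zv)=(\Bm(\xv)\cdot\rv,\,\Cm(\xv)\cdot\rv)$ together encode a single function $F:\Xs\to G$ of support size at most $q$, and then maximize the number of admissible $\zv$ over $\hv$, which is achieved at $\hv=0$. The paper's proof is a terser version of exactly this.

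One point worth flagging: you derive $\sum_{i=0}^q\binom{k}{i}(N-1)^i$, whereas the theorem as stated (and the last line of the paper's proof) has $\binom{M}{i}$ with $M=|\Xs|$. Your version is the correct one. The paper's own intermediate reasoning --- ``the possible values of $\zv$ are just the vectors of \emph{length $k$} that are non-zero on at most $q$ points'' --- also yields $\binom{k}{i}$, so the appearance of $M$ looks like a typo (or an implicit $M=k$ convention inherited from the van Dam setting). Indeed, for $M>k$ and $M$ large the quantity $\frac{1}{N^k}\sum_{i=0}^q\binom{M}{i}(N-1)^i$ exceeds $1$, so it cannot be an exact success probability. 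Your more careful accounting, tracking $s=|\mathrm{supp}(\hv)|$ and showing the count is $\sum_{i\le q-s}\binom{k}{i}(N-1)^i$ before maximizing over $s$, is, if anything, a cleaner justification that $\hv=\zerom$ is optimal than the paper's one-line assertion.
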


This reproves the bound given by Boneh and Zhandry~\cite{EC:BonZha13}.  To prove Theorem~\ref{thm:interrogate}, we rephrase Oracle Interrogation as a Quantum Oracle Classification problem.  The $k$ distinct inputs produced by the adversary specify the homomorphism that maps a function $A$ to its outputs on those $k$ points.  The $k$ outputs produced by the adversary are then the adversary's attempt to evaluate the homomorphism.  Using Lemma~\ref{lem:adversarial}, we know that it suffices fix a set of $k$ distinct inputs, say $S$, and consider an adversary that tries to evaluate $A$ on just those inputs.

$\Bs$, the kernel of the homomorphism defined by $S$, consists of functions that are zero on $S$, and quotient group $\Cs$ can be taken to consist of functions that are zero everywhere except $S$.  

As with parity, we can assign a ring structure to $G$, and $\Bs,\Cs$ will be free modules over $G$.  Let $B_y(x)=\begin{cases}
1&\text{if }x=y\\0&\text{otherwise}\end{cases}$.  Then $\Bs$ is spanned by $B_y$ for $y\notin S$, and $\Cs$ is spanned by $B_y$ for $y\in S$.  Let $\Bm(\xv),\Cm(\xv)$ be as in Corollary~\ref{cor:main}.  The possible values of $\zv=\Cm(\xv)\cdot\rv$ are just the vectors of length $k$ that are non-zero on at most $q$ points.  Of course, we need to choose some $\hv$ and restrict to $(\xv,\rv)$ such that $\hv=\Bm(\xv)\cdot\rv$.  It is straightforward to see that setting $\hv=0$ will still allow all vectors of at most $q$ points even after the restriction.  

Thus, the total number of $\zv$ is \[\sum_{i=0}^q\binom{M}{i}(N-1)^i\]
So the maximum success probability is \[\frac{1}{N^k}\sum_{i=0}^q\binom{M}{i}(N-1)^i\]

\subsection{Polynomial Interpolation}

The polynomial interpolation problem asks, given a random degree-$d$ polynomial $A$ over a field $\F$, to determine the $d+1$ coefficients of $A$.  We prove the following theorem:

\begin{theorem}\label{thm:interpolate}The maximum success probability of a $q$-query algorithm in the polynomial interpolation problem is
	\begin{enumerate}
		\item\label{line:small} At most $\frac{1}{|\F|}$ if $q\leq \frac{d}{2}$
		\item\label{line:med} Between $\frac{1}{q!}\left(1-\binom{q+1}{2}\frac{1}{|\F|}\right)$ and $\frac{1}{q!}$ if $q=\frac{d}{2}+\frac{1}{2}$
		\item\label{line:big} At least $1-\frac{(d/2+1)!+d+1}{|\F|}$ if $q\geq \frac{d}{2}+1$
	\end{enumerate}
\end{theorem}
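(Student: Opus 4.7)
The plan is to invoke Corollary~\ref{cor:main} with $f$ the identity on $\As$, giving $\Bs=\{0\}$, $\Cs=\As$ with basis $C_i(x)=x^i$ for $i=0,\ldots,d$, $\Cm(\xv)$ the $(d+1)\times q$ Vandermonde-type matrix with $(i,j)$-entry $x_j^i$, and empty $\Bm$. The success probability is then $|\F|^{-(d+1)}$ times the number of distinct $\zv=\Cm(\xv)\rv$. The central reformulation is to identify $\zv\in\F^{d+1}$ with the truncated series $Z(t)=\sum_{i=0}^{d}z_i t^i\in\F[t]/(t^{d+1})$: since the $j$-th column of $\Cm(\xv)$ equals the truncation of $1/(1-x_j t)$, we obtain $Z(t)\equiv\sum_j r_j/(1-x_j t)\equiv P(t)/Q(t)\pmod{t^{d+1}}$ where $Q(t)=\prod_j(1-x_j t)$ has $\deg Q\leq q$ and $Q(0)=1$, and $\deg P<q$. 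Counting achievable $\zv$'s thus reduces to counting truncated rational functions of this form.

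Parts (\ref{line:small}) and (\ref{line:med}) hinge on an injectivity observation: if $(P,Q)$ and $(P',Q')$ give the same $Z\bmod t^{d+1}$, then $PQ'-P'Q\equiv 0\pmod{t^{d+1}}$, and since $\deg(PQ'-P'Q)<2q$, this forces $PQ'=P'Q$ as polynomials whenever $2q\leq d+1$, so $Z$ determines the rational function $P/Q$. For part (\ref{line:small}) ($q\leq d/2$), this bounds the $\zv$-count by $|\F|^q\cdot|\F|^q=|\F|^{2q}\leq|\F|^d$, giving probability at most $1/|\F|$. For part (\ref{line:med}) ($q=(d+1)/2$), partial fractions biject the unordered data $\{(x_j,r_j)\}$ (with distinct $x_j\in\F$) with the reduced pair $(P,Q)$, so the $\zv$-count is at most $\binom{|\F|}{q}\cdot|\F|^q\leq |\F|^{d+1}/q!$, giving the upper bound $1/q!$. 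For the matching lower bound, I would count unordered data with distinct $x_j\in\F$ and all $r_j\neq 0$: this produces $\binom{|\F|}{q}(|\F|-1)^q$ distinct reduced pairs; dividing by $|\F|^{d+1}$ and applying the Weierstrass inequality yields $\frac{1}{q!}\prod_{i=1}^{q-1}(1-i/|\F|)\cdot(1-1/|\F|)^q\geq\frac{1}{q!}\bigl(1-\binom{q+1}{2}/|\F|\bigr)$.

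For part (\ref{line:big}), it suffices to prove the bound at $q=d/2+1$, since more queries cannot decrease the success probability. Here $2q=d+2$, injectivity fails, and for each $\zv$ the constraints $ZQ\equiv P\pmod{t^{d+1}}$ with $\deg P<q$ reduce to $q-1$ linear equations on the $q$ unknowns $Q_1,\ldots,Q_q$, producing a generically one-dimensional affine family of candidate $Q$'s in $\F^q$. Then $\zv$ is achievable iff some $Q$ in this line splits over $\F$ into $q$ distinct linear factors (with $\deg Q<q$ allowed, corresponding to one coordinate $x_j=0$). I would bound the ``bad'' set of $\zv$'s by isolating two contributions: (i) $\zv$'s for which the constraint matrix drops rank, which live in a subvariety of $\F^{d+1}$ contributing an $O(d+1)$ error, and (ii) generic $\zv$'s whose one-parameter family lies entirely outside the locus of realizable (split squarefree) polynomials, controlled by a counting argument that exploits the density $\approx 1/q!$ of such polynomials inside $\F^q$. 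Combining gives the stated error $((d/2+1)!+d+1)/|\F|$.

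The hard part will be part (\ref{line:big}): injectivity no longer constrains fibers, so one must verify that the affine lines of candidate $Q$'s---though not random but structured by $Z$---almost always meet the incidence variety of split squarefree polynomials of degree $q$. The delicate step is to convert the density estimate ``realizable $Q$'s form a $\approx 1/q!$ fraction of $\F^q$'' into a genuine tail bound on the number of bad lines, which in turn requires isolating the small set of $\zv$'s with degenerate fiber geometry (giving the $d+1$ term) or unlucky fiber location (giving the $q!$ term). Parts (\ref{line:small}) and (\ref{line:med}), by contrast, follow fairly directly from the rational-function bijection together with Weierstrass.
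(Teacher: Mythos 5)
Your treatment of parts~\ref{line:small} and~\ref{line:med} is correct and reaches the same counts as the paper by a different, somewhat more elegant bookkeeping. The paper works directly with the Vandermonde matrix $\Cm(\xv)$: for the upper bounds it simply counts pairs $(\xv,\rv)$, bounding $N_{d,q}\leq\binom{|\F|}{q}|\F|^q\leq|\F|^{2q}/q!$, and for the lower bound in part~\ref{line:med} it shows that two pairs $(\xv,\rv)\neq(\xv',\rv')$ with all $r_i\neq 0$ cannot collide by writing the collision as a $2q\times 2q$ Vandermonde system and arguing it has full rank when $\xv,\xv'$ are disjoint, and forces $r_q=r'_q=0$ when they overlap. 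Your reformulation identifies $\zv$ with the truncation of $P(t)/Q(t)$ modulo $t^{d+1}$ and replaces the matrix-rank argument with the polynomial-degree argument $\deg(PQ'-P'Q)<2q\leq d+1\Rightarrow PQ'=P'Q$, together with uniqueness of the partial-fraction decomposition; this is the same mathematics repackaged, and I think it makes the structure of the argument clearer. The final count $\binom{|\F|}{q}(|\F|-1)^q$ and the Weierstrass estimate match the paper exactly.

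Part~\ref{line:big}, however, has a real gap. You correctly set up that for each $\zv$ the candidate $Q$'s form a generically one-dimensional affine family $L_\zv\subseteq\F^q$ (the $q-1$ coefficient constraints on $q$ unknowns are right when $q=d/2+1$), and that $\zv$ is achievable iff $L_\zv$ meets the locus of split, squarefree monic-normalized polynomials. But the proposed step --- ``a counting argument that exploits the density $\approx 1/q!$ of such polynomials inside $\F^q$'' --- does not yield a bound on the number of bad $\zv$'s. The lines $L_\zv$ are determined by $Z$, not chosen at random; an affine line can lie entirely inside (or entirely avoid) an algebraic set of small density, so a density estimate gives no tail bound on ``bad lines'' without a structural argument that you do not supply. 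You flag this yourself as the delicate step, but no mechanism is given for converting density into a tail bound, and the claimed split into an $O(d+1)$ degenerate term and a $q!$ term appears to be reverse-engineered from the target expression rather than derived. The paper sidesteps this entirely with a direct second-moment-style argument: fix the distribution that samples $\xv$ uniformly from $q$-subsets of $\F$ and $\rv$ uniformly from $(\F\setminus\{0\})^q$, bound the collision probability of the map $(\xv,\rv)\mapsto\Cm(\xv)\rv$ by splitting into three cases (disjoint $\xv,\xv'$, where the $\Mm$ matrix of Equation~\ref{eq:lower} has a $1$-dimensional kernel; overlapping but unequal, where the argument from part~\ref{line:med} gives \emph{no} collisions; and $\xv=\xv'$), and then observe that image size is at least the reciprocal of collision probability. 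That yields a concrete bound
\[
\frac{1}{(|\F|-1)^{d+1}}+\frac{1}{\binom{|\F|}{q}(|\F|-1)^q}
\]
on the collision probability, whose reciprocal gives the claimed $1-((d/2+1)!+d+1)/|\F|$. I would recommend you replace your part~\ref{line:big} sketch with this collision-probability argument, which dovetails nicely with the Vandermonde casework you already need for part~\ref{line:med}.
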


In particular, for $|\F|\gg d!$, we have that the maximum success probability is vanishingly small for $q\leq \frac{d}{2}$, is roughly $\frac{1}{q!}$ for $q=\frac{d}{2}+\frac{1}{2}$, and is overwhelming for $\frac{d}{2}+1$.  Thus, for even $d$, there is a sharp threshold at $d/2$, exactly half the threshold of the classical setting.  For odd $d$, however, the story depends on how large $d$ and $q$ are.  If $d$ and $q$ are taken to be logarithmic or larger, so that $1/q!$ is negligible, then there is a sharp threshold as $q$ goes from $d/2+1/2$ to $d/2+3/2$.  However, if $d,q$ are constants, then there is no sharp threshold at all: the success probability goes from vanishing at $d/2-1/2$ to constant at $d/2+1/2$ to overwhelming at $d/2+3/2$.  

\medskip

Now we prove Theorem~\ref{thm:interpolate}.  The homomorphism for Polynomial Interpolation is just the identity homomorphism.  Thus the kernel $\Bs$ is just the zero function, and $\Cs$ is the set of all degree-$d$ polynomials.  $\Cs$ is spanned by the functions $C_i(x)=x^i$ for $i=0,\dots,d$.  Therefore, the matrix $\Cm(\xv)$ in Corollary~\ref{cor:main} is just the Vandermonde matrix
\[\Cm(\xv)=\left(\begin{array}{cccc}
1& 1& \cdots & 1\\
x_1& x_2& \cdots & x_q\\
\vdots & \vdots & \ddots & \vdots \\
x_1^d & x_2^d& \cdots & x_q^d
\end{array}\right)\]

Meanwhile, the $\Bm$ matrix is empty, meaning there is no constraint on the $\xv,\rv$ pairs allowed.  Our goal, therefore, is to count the number of vectors $\zv$ that can be represented as $\zv=\Cm(\xv)\cdot\rv$ for $\xv$ that are subsets of $|\F|$ of size $q$, and $\rv\in|\F|^q$.  Let $N_{d,q}$ be the number of such distinct vectors.  Then the total success probability by Corollary~\ref{cor:main} is $N_{d,q}/|\F|^{d+1}$.  

We can trivially bound $N_{d,q}$ from above by $\binom{|\F|}{q}|\F|^q\leq |\F|^{2q}/q!$.  This gives us the upper bounds in parts~\ref{line:small} and~\ref{line:med}.  

\smallskip

For the lower bound in part~\ref{line:med}, we want to show that the number of vectors is close to the number of pairs $(\xv,\rv)$.  Suppose that two distinct pairs $(\xv,\rv)$ and $(\xv',\rv')$ map to the same vector: $\Cm(\xv)\cdot \rv=\Cm(\xv')\cdot\rv'$.  We can write this equation as:
\begin{equation}\label{eq:lower}
\zerom=\Mm\cdot\sv=\left(\begin{array}{cccc|cccc}
1      & 1      & \cdots & 1      & 1      & 1      & \cdots & 1      \\
x_1    & x_2    & \cdots & x_q    & x_1'   & x_2'   & \cdots & x_q'   \\
\vdots & \vdots & \ddots & \vdots & \vdots & \vdots & \ddots & \vdots \\
x_1^d  & x_2^d  & \cdots & x_q^d  & x_1'^d & x_2'^d & \cdots & x_q'^d
\end{array}\right)\cdot
\left(\begin{array}{c}r_1\\r_2\\\vdots\\r_q\\\hline -r_1'\\-r_2'\\\vdots\\-r_q'\end{array}\right)\end{equation}

Since we are considering the case where $d+1=2q$, the matrix $\Mm$ is square.  Moreover, $\Mm$ is also a Vandermonde matrix, and it is full rank if $\xv$ and $\xv'$ do not overlap.  In this case, the only solution is $\sv=\zerom$, which means $\rv=\rv'=\zerom$.  If there is an overlap between $\xv,\xv'$, reorder $\xv,\xv'$ so that the first $k$ elements are identical, and the remaining $q-k$ are distinct, and apply the same reordering to $\rv,\rv'$.  We can then rewrite Equation~\ref{eq:lower} as \begin{equation}
\label{eq:lower2}\zerom=\Mm'\cdot\sv' = \left(\begin{array}{ccc|ccc|ccc}
1      & \cdots & 1      & 1         & \cdots & 1      & 1          & \cdots & 1      \\
x_1    & \cdots & x_k    & x_{k+1}   & \cdots & x_q    & x_{k+1}'   & \cdots & x_q'   \\
\vdots & \ddots & \vdots & \vdots    & \ddots & \vdots & \vdots     & \ddots & \vdots \\
x_1^d  & \cdots & x_k^d  & x_{k+1}^d & \cdots & x_q^d  & x_{k+1}'^d & \cdots & x_q'^d
\end{array}\right)\cdot \left(\begin{array}{c}r_1-r1'\\\vdots\\r_k-r_k'\\\hline r_{k+1}\\\vdots\\r_q\\\hline r_{k+1}'\\\vdots\\r_q'\end{array}\right)\end{equation}

Now $\Mm'$ is full rank, and it is ``tall''  since its width is $2q-k<2q$ and its height is $d+1=2q$.  Therefore, the only solution is $\sv'=0$, which means $r_i=r_i'$ for $i=1,\dots,k$ and $r_i=r_i'=0$ for $i=k+1,\dots,q$.  Notice that since $\rv=\rv'$, it must be that $\xv\neq\xv'$, and therefore $k<q$.  Therefore, $r_q=r_q'=0$.  

Thus, if we restrict to $\rv$ that are non-zero in every coordinate, there are \emph{no} collisions among the $(\xv,\rv)$.  Counting the number of pairs of vectors $(\xv,\rv)$ where $r_i\neq 0\forall i$ then gives us a lower bound on $N_{d,q}$.  The number of such vectors is $\binom{|\F|}{q}(|\F|-1)^q\geq \frac{|\F|^{2q}}{q!}\left(1-\binom{q+1}{2}\frac{1}{|\F|}\right)$.  Dividing by $|\F|^{d+1}=|\F|^{2q}$ gives us the lower bound in part~\ref{line:med} of the theorem.

\smallskip

Finally, it remains to prove the lower bound in part~\ref{line:big}.  Again, the goal is to show that there are not \emph{too} many collisions in the function $(\xv,\rv)\mapsto\zv=\Cm(\xv)\cdot\rv$.  However, this time there \emph{will} be many collisions, as the number of pairs is $\approx |\F|^{2q}/q!$, whereas the co-domain has size $|\F|^{d+1}\geq |\F|^{2q-1}\ll |\F|^{2q}/q!$.  

We will prove a lower bound on the number of vectors by proving an upper bound on the collision probability.  That is, we will choose a distribution $D$ on pairs $(\xv,\rv)$, and show that collisions happen with probability at most $p$.  It then follows that the image size must be at least $1/p$.

We prove the lower bound in the case $q=d/2+1$; the lower bound easily extends to larger $q$.  Let $D$ be the distribution that picks $\xv$ uniformly at random from the subsets of $\F$ of size $q$, and $\rv$ uniformly at random from $(\F\setminus\{0\})^q$.  Consider sampling two pairs $(\xv,\rv),(\xv',\rv')$.  We will now bound the probability they collide.  We consider three cases:
\begin{itemize}
	\item $\xv$ and $\xv'$ do not overlap.  Look again at the matrix $\Mm$ from Equation~\ref{eq:lower}.  This matrix is no longer square, but is actually wide, since its width is $2q$, but its height is $d+1<2q$.  Therefore, its rank is $d+1$, and its null space has dimension $2q-(d+1)=1$.  This means that $\sv=(\rv,-\rv')$, which is uniform on $(\F\setminus\{0\})^{2q}$ has probability at most $\frac{|\F|-1}{(|\F|-1)^{2q}}=\frac{1}{(|\F|-1)^{d+1}}$ of being in the null space.
	\item $\xv$ and $\xv'$ overlap but are not equal.  In this case, we look at the matrix $\Mm'$ in Equation~\ref{eq:lower2}.  It is full rank, and its width is now only $2q-k\leq d+1$.  Therefore as before, the only solutions have $r_q=r_q'=0$.  Since we are restricting to $r_q\neq 0$, there are \emph{no} collisions in this case.
	\item $\xv=\xv'$.  In this case, we have a collision if and only if $\rv=\rv'$, which happens with probability $\frac{1}{(|\F|-1)^{2q}}$.  Notice that $\xv=\xv'$ with probability $1/\binom{|\F|}{q}$
\end{itemize}

Therefore the total collision probability is bounded by \[\frac{1}{(|\F|-1)^{d+1}}+\frac{1}{\binom{|\F|}{q}(|\F|-1)^q}\]

The image size of the map $(\xv,\rv)\mapsto\Cm(\xv)\cdot\rv$ is at least the reciprocal of this probability.  Through a straightforward though tedious series expansion for large $|\F|$, the reciprocal can be bounded from below by

\[|\F|^{d+1}\left(1-\frac{q!+2q-1}{|\F|}\right)\]

Dividing by $|\F|^{d+1}$ and setting $q=d/2+1$ gives the desired result.

\medskip

Next, we consider two variants of polynomial interpolation.

\subsection{Polynomial Evaluation}

In the polynomial evaluation problem, we are given oracle access to a random degree-$d$ polynomial $A$ over a field $\F$, and we are tasked with evaluating the polynomial are certain inputs. Since we can easily learn as many points as we made queries, our goal is to learn \emph{more} points than the number of queries made.  Specifically, given $q$ queries, we are asked to produce $q+1$ distinct input/output pairs.

\begin{theorem}\label{thm:eval} For $q\leq d/2$, the maximum success probability of any $q$-query quantum algorithm in the Polynomial Evaluation problem is at most $(q+1)qe^{2\sqrt{q}}/|\F|$.
\end{theorem}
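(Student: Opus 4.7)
The plan is to phrase Polynomial Evaluation as a Group QOC instance and apply Corollary~\ref{cor:main}. By Lemma~\ref{lem:adversarial} I may fix a set $S \subseteq \F$ of $q+1$ distinct points and the homomorphism $f(A) = (A(s))_{s \in S}$; then $\Bs$ consists of the degree-$d$ polynomials vanishing on $S$, and I take $\Cs$ to be the polynomials of degree $\leq q$ with monomial basis $\{1, x, \ldots, x^q\}$, while $\Bs$ has basis $\{\pi(x) x^\ell : 0 \leq \ell \leq d-q-1\}$ with $\pi(x) = \prod_{s \in S}(x-s)$.

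The central reformulation is to identify each pair $(\xv,\rv)$ with a discrete ``measure'' $\mu = \sum_i r_i \delta_{x_i}$ of support size at most $q$ on $\F$. Then the coordinates of $\zv = \Cm(\xv)\rv$ are the moments $p_j = \sum_i r_i x_i^j$ for $j = 0,\ldots,q$, while each coordinate $\hv_\ell = \sum_i r_i \pi(x_i) x_i^{\ell-1}$ is a fixed linear combination, via the coefficients of $\pi$, of the higher moments $p_{\ell-1},\ldots,p_{q+\ell}$. In other words, $(\hv,\zv)$ records the action of $\mu$ as a linear functional on the space of degree-$d$ polynomials. Since $q \leq d/2$ forces $d+1 \geq 2q+1$, the classical Vandermonde/Prony argument shows that two measures of support size $\leq q$ with the same action on this space must coincide, so the map $(\xv,\rv)\mapsto(\hv,\zv)$ is a bijection onto its image and $N_\hv := |\{\zv : (\hv,\zv) \in \mathrm{image}\}|$ equals the number of measures $\mu$ of support $\leq q$ whose action on $\Bs$ matches $\hv$.

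For the fiber analysis I would split $\mu = \mu^* + \nu$ with $\mathrm{supp}(\mu^*) \subseteq \F \setminus S$ and $\mathrm{supp}(\nu) \subseteq S$. Since every element of $\Bs$ vanishes on $S$, the $\nu$-part contributes nothing to the $\Bs$-action, so the constraint $\mu|_{\Bs} = \hv$ falls entirely on $\mu^*$. Writing $k^* = |\mathrm{supp}(\mu^*)|$ and $k_S = |\mathrm{supp}(\nu)|$ with $k^* + k_S \leq q$, the $\nu$-side freely contributes $\binom{q+1}{k_S}(|\F|-1)^{k_S}$ choices. For the $\mu^*$-side, since $d-q \geq q \geq k^*$ the defining Vandermonde-type system for $\mu^*$ (after the substitution $s_y = r_y \pi(y)$) has full column rank: for $k^* \leq (d-q)/2$ it admits at most one $\mu^*$ of that support size (Prony uniqueness), and in general the count is bounded by the number of candidate supports $\binom{|\F|-q-1}{k^*} \leq |\F|^{k^*}/k^*!$. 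Plugging these into Corollary~\ref{cor:main} and grouping by $k^*$ produces an expression bounded above by a Bessel-type series of the shape $(q+1)^2 |\F|^q \sum_{k^* \geq 0} (q+1)^{k^*}/(k^*!\,(k^*+1)!)$, which is at most $(q+1)q\,e^{2\sqrt{q}}\,|\F|^q$ via the standard estimate $\sum_{n \geq 0} x^n/(n!)^2 \leq e^{2\sqrt{x}}$. Dividing by $|\Cs| = |\F|^{q+1}$ then gives the claimed success bound.

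The main obstacle is the non-uniqueness regime $(d-q)/2 < k^* \leq q$, sharpest at $d = 2q$, where distinct $\mu^*$'s on $\F \setminus S$ can realize the same $\hv$ and the count must pass through candidate supports rather than invoke Prony uniqueness. The key tension balanced in the double sum is that a large $\mu^*$ admits many realizations but only few $\nu$-addons, while a small $\mu^*$ is uniquely determined but comes with many addons; the Bessel identity is precisely what combines these two regimes and produces the $e^{2\sqrt{q}}$ factor.
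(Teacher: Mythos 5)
Your proposal is essentially the paper's argument, just re-packaged in the language of discrete measures and Prony uniqueness. The paper fixes $\Ys$ (your $S$), notes $\Bs$ is spanned by $Q(x)x^\ell$ with $Q=\prod_{y\in\Ys}(x-y)$, substitutes $r_i'=Q(x_i)r_i$ to turn $\Bm(\xv)\rv=\hv$ into a tall Vandermonde system (forcing at most one $\rv'$ for each $\xv$), and then observes that when $x_i\in\Ys$ the original $r_i$ is a free variable; counting over the size $k$ of $\xv\cap(\F\setminus\Ys)$ gives $\sum_{k}\binom{q+1}{q-k}\binom{|\F|-q-1}{k}|\F|^{q-k}$, which is exactly your decomposition $\mu=\mu^*+\nu$ with $k^*=k$ and the $\nu$-side contributing freely. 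Your substitution $s_y=r_y\pi(y)$ is the paper's $r_i'=Q(x_i)r_i$, and your Vandermonde full-column-rank claim (valid since $d-q\ge q\ge k^*$) is the paper's argument that $\Mm'(\xv)$ is tall and full rank. Your ``Prony uniqueness'' observation ($d+1\ge 2q+1$ makes $\mu\mapsto(\hv,\zv)$ injective) is a correct and somewhat cleaner justification than the paper's of why counting distinct $\zv$ reduces to counting admissible measures, though note the map is injective on measures, not on pairs $(\xv,\rv)$.

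The one soft spot is the final numerical step. You claim $(q+1)^2\sum_{n\ge 0}(q+1)^n/(n!(n+1)!)\le(q+1)q\,e^{2\sqrt q}$ ``via $\sum_n x^n/(n!)^2\le e^{2\sqrt x}$,'' but plugging that crude estimate in yields $(q+1)^2e^{2\sqrt{q+1}}$, and $(q+1)^2e^{2\sqrt{q+1}}>(q+1)q\,e^{2\sqrt q}$ for every $q$, so the cited inequality does not close the gap as stated. Your double sum does in fact satisfy the theorem's bound (one can check it directly for small $q$, and for large $q$ a sharper Bessel estimate $I_1(z)\le e^z/\sqrt{2\pi z}$ brings in a compensating $q^{-3/4}$ factor), but you'd need to either tighten the Bessel estimate or, like the paper, optimize the summand $\binom{q}{k}/k!$ over $k$ via Stirling/entropy. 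This is a finishing-touches issue, not a structural gap; the decomposition, the Vandermonde argument, and the identification of the Bessel-type sum as the combinatorial bottleneck all match the paper.
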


This shows that setting $d=2q$ suffices for a degree-$d$ polynomial to be a $q$-time quantum-secure message authentication code (MAC).  This improves on the previously best bound of $d=3q$ due to Boneh and Zhandry~\cite{EC:BonZha13}.  In particular, for a 1-time MAC, degree 2 is sufficient.  As it is known that degree 1 is insufficient, this resolves the question for 1-time MACs.  More generally, for any constant $q$, $q$ queries suffice to recover a degree $2q-1$ polynomial with constant probability.  Therefore, degree $d=2q$ is required.  In the case where $q$ is allowed to grow, a more careful analysis is required to see if degree $2q$ is optimal.

\medskip

Now we prove Theorem~\ref{thm:eval}.  Fix any $q+1$ distinct inputs $\Ys$ that the adversary will try to evaluate $A$ on.  The null space $\Bs$ is the set of polynomials that vanish on $\Ys$.  Such polynomials can be written as $A(x)=Q(x)A'(x)$ where $Q(x)=\prod_{y\in\Ys}(x-y)$ and $A'(x)$ is a random degree $d-q-1$ polynomial.  This space is spanned by $Q(x)x^i$ for $i=1,\dots,d-q-1$.  Therefore, the condition $\hv=\Bm(\xv)\cdot\rv$ can be written as:
\begin{align}
\hv&=\left(\begin{array}{cccc}
Q(x_1)            & Q(x_2)            & \cdots & Q(x_q)            \\
Q(x_1)x_1         & Q(x_2)x_2         & \cdots & Q(x_q)x_q         \\
\vdots            & \vdots            & \ddots & \vdots            \\
Q(x_1)x_1^{d-q-1} & Q(x_2)x_2^{d-q-1} & \cdots & Q(x_q)x_q^{d-q-1}
\end{array}\right)\cdot \left(\begin{array}{c}r_1\\r_2\\\vdots\\r_q\end{array}\right)\notag\\
&=
\left(\begin{array}{cccc}
1           & 1           & \cdots & 1           \\
x_1         & x_2         & \cdots & x_q         \\
\vdots      & \vdots      & \ddots & \vdots      \\
x_1^{d-q-1} & x_2^{d-q-1} & \cdots & x_q^{d-q-1}
\end{array}\right)\cdot \left(\begin{array}{c}Q(x_1)r_1\\Q(x_2)r_2\\\vdots\\Q(r_q)r_q\end{array}\right)=:\Mm'(\xv)\cdot\rv'\label{eq:eval}
\end{align}

$\Mm'(\xv)$ is the usual Vandermonde matrix, which is full rank.  Since $d\geq 2q$, the matrix is at least as tall as it is wide, and therefore, for a given $\hv,\xv$, there is exactly one solution for the vector $\rv'$.  As $r_i'=Q(x_i)r_i$, this uniquely determines each of the $r_i$, except in the case that $r'_i=0$ and $x_i\in\Ys$ (so that $Q(x_i)=0$).  If $k$ of the $x_i$ are not in $\Ys$ (so $q-k$ are in $\Ys$), we see that the maximum number of solutions is $|\F|^{q-k}$.  The number of $\xv$ that overlap with $\Ys$ in $q-k$ points is $\binom{q+1}{q-k}\binom{|\F|-q-1}{k}$.  Therefore, for any given $\hv$, the total number of $(\xv,\rv)$ pairs satisfying Equation~\ref{eq:eval} is therefore at most 
\[\sum_{k=0}^q \binom{q+1}{q-k}\binom{|\F|-q-1}{k} |\F|^{q-k}\leq \sum_{k=0}^q\binom{q+1}{q-k} \frac{|\F|^{k}}{k!}|\F|^{q-k}\leq |\F|^q (q+1)\sum_{k=0}^q \binom{q}{k}\frac{1}{k!}=:|\F|^q N_q\]

Hence this is also a bound on the number of vectors $\zv=\Cm(\xv)\cdot\rv$ such that $\Bm(\xv)\cdot \rv=\hv$.  We now bound $N_q$.  Let $k=\alpha q$.  First, we recall that $\binom{q}{\alpha q}\leq e^{H_\alpha q}$ where $H_\alpha=-(1-\alpha)\ln(1-\alpha)-\alpha\ln \alpha$ is the entropy of a coin flip that is heads with probability $\alpha$.  Then we use Stirling's approximation to lower bound $k!\geq e^{k(\ln k-1)}$.  Therefore, $\ln \left(\binom{q}{\alpha q}/k!\right)\leq H_\alpha q -\alpha q(\ln (\alpha q)-1)=:P_q(\alpha)$.  It is straightforward to show that $P_q$ attains a maximum value at $\alpha=\frac{-1+\sqrt{4q+1}}{2a}$, and we can bound $P_q$ at this value by $2\sqrt{q}$.  Thus we can bound the summand in the expression for $q$ as $e^{2\sqrt{q}}$, giving a total bound of $N_q\leq (q+1)qe^{2\sqrt{q}}$.  

Hence the number of vectors $\Cm(\xv)\cdot\rv$ for any given $\hv$ is at most $(q+1)qe^{2\sqrt{q}}|\F|^q$.  Dividing by $|\F|^{q+1}$ (the total number of possible outputs on $q+1$ points), we obtain a maximum success probability of $\frac{(q+1)qe^{2\sqrt{q}}}{|\F|}$, as desired.

\subsection{Polynomial Extrapolation}

In the polynomial extrapolation problem, we are given oracle access to a random degree-$d$ polynomial $A$ over a field $\F$ everywhere except a single point, say 0, and we are tasked with evaluating the polynomial at 0.  This problem relates to a generalization of Shamir secret sharing where shares are superpositions.

\begin{theorem}\label{thm:extrapolate} The maximum success probability of a $q$-query quantum algorithm in the Polynomial Extrapolation problem is:
	\begin{enumerate}
		\item \label{line:small2} $1/|\F|$ if $q\leq d/2$.
		\item \label{line:med2} At most $\lfloor(|\F|-1)/q\rfloor/|\F|\approx 1/q$ for $q=(d+1)/2$.  Moreover, if $q|(|\F|-1)$, this is tight.
		\item \label{line:big2} At least $1-\frac{(d/2+1)!+d+1}{|\F|}$ if $q\geq (d+2)/2$.
	\end{enumerate}
\end{theorem}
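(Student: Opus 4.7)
The plan is to instantiate Corollary~\ref{cor:main} with $\Bs$ (the kernel of $A\mapsto A(0)$) spanned by $x,x^2,\ldots,x^d$ and $\Cs$ spanned by the constant function $1$, so that $|\Cs|=|\F|$, the $d\times q$ matrix $\Bm(\xv)$ has $(j,i)$-entry $x_i^j$, and $\Cm(\xv)$ is the $1\times q$ all-ones row. Setting $p_k=\sum_i r_i x_i^k$ one identifies $\hv=(p_1,\ldots,p_d)$ and $z=p_0$ with the first $d+1$ Taylor coefficients of the rational function $R(t)=\sum_{i=1}^q r_i/(1-x_i t)=N(t)/D(t)$, where $D(t)=\prod_i(1-x_i t)$ has degree $q$ (the $x_i$ are distinct and nonzero) and $\deg N<q$. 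Two pairs $(\xv,\rv)$ and $(\xv',\rv')$ producing the same $\hv$ but differing in $p_0$ by $c_0$ correspond to $R-R'-c_0=O(t^{d+1})$; reducing $R-R'=\tilde N/\tilde D$ to lowest terms (so $\deg\tilde N<\deg\tilde D\leq 2q-k-\ell$, where $k=|\xv\cap\xv'|$ and $\ell$ counts residue cancellations), the condition becomes that $t^{d+1}$ divides the polynomial $\tilde N-c_0\tilde D$ of degree at most $2q-k-\ell$.

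For Part~\ref{line:small2} ($d\geq 2q$), we have $d+1>2q\geq 2q-k-\ell$, so $\tilde N-c_0\tilde D$ must be identically zero, and the strict inequality $\deg\tilde N<\deg\tilde D$ then forces $c_0=0$. Hence $N(\hv)\leq 1$ for every $\hv$, giving success probability exactly $1/|\F|$. For Part~\ref{line:med2} ($d=2q-1$), the divisibility now admits the single nonzero possibility $\tilde N-c_0\tilde D=c\cdot t^{2q}$ with $c\neq 0$, but this requires $\deg\tilde D=2q$, i.e.\ $k=\ell=0$. Thus any two pairs realizing distinct $p_0$-values on a common $\hv$ have disjoint $\xv,\xv'\subseteq\F\setminus\{0\}$; since these are $q$-subsets of a set of size $|\F|-1$, one concludes $N(\hv)\leq\lfloor(|\F|-1)/q\rfloor$.

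For the matching lower bound in Part~\ref{line:med2} when $q\mid|\F|-1$, I would exhibit the unique subgroup $H\leq\F^\times$ of order $q$ and its $N=(|\F|-1)/q$ cosets $C_j=g^jH$ for a generator $g$ of $\F^\times$; setting $\xv^{(j)}=C_j$ with $\rv^{(j)}=(a_j,\ldots,a_j)$ and using the identity $\sum_{h\in H}h^k=q\cdot\mathbf{1}[q\mid k]$, the power sums satisfy $p_k^{(j)}=0$ for $k\in\{1,\ldots,2q-1\}\setminus\{q\}$ and $p_q^{(j)}=a_j q g^{jq}$. Choosing $a_j=p_q/(qg^{jq})$ for any fixed nonzero $p_q$ aligns all $\hv^{(j)}$ on a common value, while $z^{(j)}=qa_j=p_q/g^{jq}$ takes $N$ distinct values as $g^{jq}$ traverses the order-$N$ subgroup $\langle g^q\rangle$. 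Finally, Part~\ref{line:big2} ($q\geq d/2+1$) reduces to Polynomial Interpolation: the collision-probability argument in the proof of Theorem~\ref{thm:interpolate}, Part~\ref{line:big}, adapted to the distribution with $\xv\subseteq\F\setminus\{0\}$ (so that the overlap case is ruled out by the exact same Vandermonde-rank argument used there), yields $|I|\geq|\F|^{d+1}(1-((d/2+1)!+d+1)/|\F|)$ for the image of $(\xv,\rv)\mapsto(p_0,\ldots,p_d)$, and then averaging over the at most $|\F|^d$ distinct $\hv$-values gives $\max_\hv N(\hv)\geq|\F|\cdot(1-((d/2+1)!+d+1)/|\F|)$. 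The main obstacle is the upper bound in Part~\ref{line:med2}: the disjointness of the $\xv^{(j)}$ hinges on getting the tight degree inequality $\deg(\tilde N-c_0\tilde D)\leq 2q-k-\ell$ (rather than just $\leq 2q$) out of the Pade bookkeeping, after which the combinatorial bound and the matching subgroup construction both fall out cleanly.
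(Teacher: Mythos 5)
Your proposal is correct, and for the main upper bounds (Parts~\ref{line:small2} and~\ref{line:med2}) it takes a genuinely different route from the paper. The paper works with the normalized vanishing polynomial $\hat{P}_{\xv}$ and its coefficient vector $\hat{\av}$, derives the Hankel linear system $\Hm\cdot\hat{\av}=-\hv_{[1,d-q]}$, and then for $d=2q$ exploits the symmetry of the square Hankel matrix to show $z=\hat{\av}_0^T\Hm\hat{\av}_0$ is independent of the choice of solution $\hat{\av}$; for $d=2q-1$ it peels off one query point and recurses to the even case to argue that each nonzero $x$ contributes at most one $z$-value, hence $N(\hv)\leq(|\F|-1)/q$. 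You instead package the data $(\hv,z)=(p_1,\ldots,p_d,p_0)$ as the Taylor coefficients of the proper rational function $R(t)=\sum_i r_i/(1-x_it)$, reduce the collision condition to a divisibility statement $t^{d+1}\mid(\tilde N-c_0\tilde D)$, and read off both parts from the degree of the reduced denominator: $d\geq 2q$ forces $c_0=0$ outright, while $d=2q-1$ forces $\deg\tilde D=2q$, hence disjoint supports, hence the packing bound $\lfloor(|\F|-1)/q\rfloor$. This rational-function (Padé) view unifies Parts~\ref{line:small2} and~\ref{line:med2} under a single degree count, and it also dispatches all of $d\geq 2q$ at once, whereas the paper only treats $d=2q$ explicitly and implicitly relies on the observation that extra Hankel constraints can only shrink $N(\hv)$. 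The price is that you must be careful about which factors of $D D'$ survive reduction to lowest terms when residues vanish or points coincide --- the ``obstacle'' you flag --- but that bookkeeping goes through exactly as you sketch: a nonzero $c_0$ needs $\deg\tilde D=2q$, which in turn needs $2q$ distinct poles with nonzero combined residues, i.e.\ disjoint $\xv,\xv'$ with all $r_i,r_i'\neq 0$. Your lower-bound construction for Part~\ref{line:med2} via the order-$q$ subgroup $H\leq\F^\times$ and its cosets is the same as the paper's roots-of-unity construction (set $H=\langle\omega\rangle$, $C_j=g^jH$), and your Part~\ref{line:big2} (collision bound on $(\xv,\rv)\mapsto(p_0,\ldots,p_d)$ restricted to $\xv\subseteq\F^\times$, then pigeonhole over the at most $|\F|^d$ values of $\hv$) matches the argument the paper compresses to ``it is straightforward to re-work the Interpolation analysis.''
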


This theorem shows that, for the application to quantum secret sharing discussed in the introduction, even $d$ give a tight threshold between the secret being hidden and being able to compute the secret with overwhelming probability.  However, for odd degree, there is no tight threshold, as long as $(d+1)/2$ divides $|\F|-1$.  This holds \emph{even if $q$ is polynomial}, in contrast to Polynomial Interpolation, where large polynomial $q$ give a tight threshold even for odd $d$.  We conjecture that our bounds for $q=(d+1)/2$ are tight for \emph{all} field sizes, in which case there is no tight threshold for any field size.

\medskip

We now prove Theorem~\ref{thm:extrapolate}.  Here, the kernel $\Bs$ is the set of degree-$d$ polynomials $A$ such that $A(0)=0$ (that is, the set of polynomials with no constant term).  Meanwhile, the quotient group $\Cs$ can be taken to be the set of constant polynomials.  Both of these groups form vector spaces: $\Bs$ is spanned by the polynomials $x,\dots,x^d$, and $\Cs$ is spanned by the constant polynomial $1$.  Therefore, the matrix $\Bm(\xv)$ in Corollary~\ref{cor:main} for $q$ queries is just the Vandermonde-style matrix
\[\Bm(\xv)=\left(\begin{array}{cccc}
x_1    & x_2    & \cdots & x_q    \\
x_1^2  & x_2^2  & \cdots & x_q^2  \\
\vdots & \vdots & \ddots & \vdots \\
x_1^d  & x_2^d  & \cdots & x_q^d
\end{array}\right)\]

Meanwhile the matrix $\Cm(\xv)$ is just $\Cm=(1\;1\;\cdots\; 1)$.  Suppose $d\geq q$, in which case $\Bm(\xv)$ is full rank for any set $\xv$.  Pick an $\hv\in\F^d$, and consider the set of solutions to $\hv=\Bm(\xv)\cdot\rv$.  First, since $\Bm(\xv)$ is full rank, there is at most a single $\rv$ for any $\xv$.  Now, consider an $\xv,\rv$ that is a solution. We want to determine the possible values of $z=\Cm\cdot \rv$, which is just the sum of the elements in $\rv$. 

Let $P_\xv(y)=y^q+\sum_{i=0}^{q-1}a_i y^i$ be the monic polynomial that is zero on the $q$ points of $\xv$.  The $a_i$, as functions of the $x_j\in\xv$, are simply the elementary symmetric polynomials in $q$ variables, up to a sign factor.  In particular, $a_0$ is (up to sign) the product of the $x_j$, and therefore $a_0\neq 0$ (since $\xv$ is restricted to be non-zero).  Moreover, for a given $x\in\xv$, $a_i$ can be written as $a_i=b_i x+b_{i-1}$ where the $b_i$ are the elementary symmetric polynomials (up to signs) in $q-1$ variables, evaluated on the $x_j\in\xv\setminus\{x\}$.  This uses the convention that $b_{q-1}=1,b_{-1}=0$.  

Define $\hat{a}_i=\frac{a_i}{a_0}$ for $i\in[q-1]$, and $\hat{a_q}=\frac{1}{a_0}$.  Then define $\hat{P}_{\xv}(y):= 1+\sum_{i=1}^q \hat{a}_i y^i=\frac{1}{a_0}p_{\xv}(y)$.  Therefore, $\hat{P}_{\xv}$ also has zeros on exactly the points in $\xv$.  Define $\hat{\av}$ to be the column vector of the $\hat{a}_i$.  Notice that $\hat{\av}$ uniquely determines the polynomial $\hat{P}$, which uniquely determines $\xv$, if it exists.  However, $\hat{\av}$ may determine a polynomial $\hat{P}$ which does not split over $\F$, or which has multiple roots, in which case $\xv$ does not exist.  Similarly define $\hat{b}=\frac{b_i}{b_0}$.  Note that $\hat{a}_i=\hat{b}_i+\frac{1}{x}\hat{b}_{i-1}$.  Define $\hat{\bv}$ to be the column vector of the $\hat{b}_i$

First, we observe that $-1=\sum_{i=1}^q \hat{a}_i x_j^i$ for each $x_j\in\xv$.  Define $\Bm_{[1,q]}(\xv)$ to consist of the first $q$ rows of $\Bm(\xv)$.  Then we observe that $\hat{\av}^T\cdot\Bm_{[1,q]}(\xv)=(-1\;-1\;\cdots\;-1)=-\Cm$.  Then we can write \[z=\Cm\cdot\rv=-\hat{\av}^T\cdot\Bm_{[1,q]}(\xv)\cdot\rv=-\hat{\av}^T\cdot \hv_{[1,q]}\]
where $\hv_{[1,q]}=(h_1\;h_2\;\dots\; h_q)$ is the first $q$ elements of $\hv$.  Therefore, we can determine $z$ simply from $\xv$ and $\hv$ without computing $\rv$.  If we fix an $x\in\xv$, it is straightforward to re-write this as \[z=-\frac{h_1}{x}-\hat{\bv}^T\cdot \hv_{[1,q-1]}(x)\]
Where $h_i(x)=h_i+\frac{1}{x}h_{i+1}$ and $\hv_{[1,q-1]}(x)=(h_1(x)\;h_2(x)\;\cdots\; h_{q-1}(x))$.

Next, the fact that $\hv=\Bm(\xv)\cdot\rv$ means means $\hv$ is in the column space of $\Bm(\xv)$.  Define $\Dm(\xv)$ to be the matrix

\[\Dm(\xv)=\left(\begin{array}{ccccccccc}
1            & \hat{a}_1    & \cdots       & \hat{a}_{q-1} & \hat{a}_q     & 0         & \cdots        & 0        \\
0            & 1            & \hat{a}_1    & \cdots        & \hat{a}_{q-1} & \hat{a}_q & 0             & \cdots   \\
\vdots       & \ddots       & \ddots       & \ddots        & \ddots        & \ddots    & \ddots        & \vdots   \\
0            & \cdots       & 0            & 1             & \hat{a}_1     & \cdots    & \hat{a}_{q-1} & \hat{a}_q 
\end{array}\right)\]

Notice that $\Dm(\xv)\cdot \Bm(\xv)=\zerom$ since $\hat{P}(x_i)=0$ for each $x_i\in\xv$.  This means $\Dm(\xv)\cdot\hv=\zerom$. The resulting set of equations can be re-written as
\begin{equation}\label{eq:hs}\Hm\cdot\av=\left(\begin{array}{cccc}
h_2       & h_3       & \cdots & h_{q+1} \\
h_3       & h_4       & \cdots & h_{q+2} \\
\vdots    & \vdots    & \ddots & \vdots  \\
h_{d-q+1} & h_{d-q+2} & \cdots & h_d
\end{array}\right)\cdot \left(\begin{array}{c}\hat{a}_1\\\hat{a}_2\\\vdots\\\hat{a}_q
\end{array}\right)=-\left(\begin{array}{c}h_1\\h_2\\\vdots\\h_{d-q}
\end{array}\right)=-\hv_{[1,d-q]}\end{equation}

If we fix an $x\in\xv$, it is straightforward to re-write this as
\begin{equation}\label{eq:hs2}\Hm(x)\cdot\bv=\left(\begin{array}{cccc}
h_2(x)       & h_3(x)       & \cdots & h_{q}(x) \\
h_3(x)       & h_4(x)       & \cdots & h_{q+1}(x) \\
\vdots    & \vdots    & \ddots & \vdots  \\
h_{d-q+1}(x) & h_{d-q+1}(x) & \cdots & h_{d-1}(x)
\end{array}\right)\cdot \left(\begin{array}{c}\hat{b}_1\\\hat{b}_2\\\vdots\\\hat{b}_{q-1}
\end{array}\right)=-\left(\begin{array}{c}h_1(x)\\h_2(x)\\\vdots\\h_{d-q}(x)
\end{array}\right)=-\hv_{[1,d-q]}(x)\end{equation}

\paragraph{The case $d=2q$.}  In this case, $\Hm$ is a square $q\times q$ symmetric matrix.  Fix $\hv$, which determines $\Hm$.  If $\Hm$ happened to be full rank, then there would be a unique solution $\hat{\av}$ to the equation above, meaning a unique set $\xv$.  In this case there is only a single pair $(\xv,\rv)$, meaning only a single $z$.  We will now show that this is true even if $\Hm$ is not full rank.  Pick an arbitrary solution $\hat{\av}_0$ to $\Hm\cdot\hat{\av}_0=-\hv_{[1,d-q]}=-\hv_{[1,q]}$.  Then \emph{any} solution can be written as $\hat{\av}_0+\hat{\av}$ for some $\hat{\av}$, where $\Hm\cdot\hat{\av}=0$.  Then consider the possible $z$ values, which are $-\langle\hv_q,\hat{\av}_0+\hat{\av}\rangle$.  This can be expressed as
\begin{align*}
z&=(\hat{\av}_0+\hat{\av})^T\cdot (-\hv_{[1,q]}) = (\hat{\av}_0+\hat{\av})^T\cdot \Hm\cdot (\hat{\av}_0+\hat{\av})\\
 &=\hat{\av}_0^T\cdot\Hm\cdot\hat{\av}_0+\hat{\av}^T\cdot\Hm\cdot\hat{\av}_0+\hat{\av}_0\cdot\Hm\cdot\hat{\av}+\hat{\av}\cdot\Hm\cdot\hat{\av}\\
 &=\hat{\av}_0^T\cdot\Hm\cdot\hat{\av}_0+0+0+0\\
\end{align*}

Thus, for any $\Hm$ (and hence for any $\hv$), there is exactly one value of $z$.  Applying Corollary~\ref{cor:main}, we see that the maximal success probability is $1/|\F|$, the same as random guessing.

\paragraph{The case $d=2q-1$.} In this case, $\Hm$ is no longer square, but instead has dimensions $(q-1)\times q$.  First, fix an $x\in\F$ to be included in $\xv$.  We now wish to count, for any fixed $\hv$, the total number of values $z$ takes on as we vary the other values in $\xv$.  Recall that we can write $z$ as $-\frac{h_1}{x}-\hat{\bv}^T\cdot \hv_{[1,q-1]}(x)$ where $\hat{\bv}$ is a solution to $\Hm(x)\cdot \hat{\bv} = -\hv_{[1,d-q]}(x)$ in Equation~\ref{eq:hs2}.  Since $x$ is fixed, it suffices to count the number of values $-\hat{\bv}^T\cdot\hv_{[1,q-1]}(x)$ takes on.  This counting problem is exactly a ``smaller'' instance of the Polynomial Extrapolation problem with $q'=q-1$ queries and degree $d'=d-1$ polynomials.  Notice that $d'=2q'$, so this is the problem we already solved above.  In particular, there is at most one $z$ value for this fixed $x$.  

Therefore, every non-zero $x\in\F$ is associated with at most a single $z$ value.  Moreover, for each $z$, there are at least $q$ such non-zero $x$ (since each $z$ is derived from some $\xv$ of $q$ elements).  Therefore, there are at most $(|\F|-1)/q$ possible values of $z$.  This completes the upper bound. 

\medskip

For the lower bound, we consider the case where $q|(|\F|-1)$.  This means $q$ divides the order of the multiplicative group of non-zero elements of $\F$.  This means there is a primitive $q$th root of unity $\omega$, $\omega^q=1$.  We will let $h=(0\;\cdots\; 0\;q\;0\;\cdots\;0)$ be the vector of $2q-2$ 0's with a $q$ in the middle.  Then for any $x\in\F$, consider the vector $\xv=(x\;\omega x\; \dots\;\omega^{q-1}x)$ and $\rv=(1\;1\;\dots\;1)/x^q$.  Then the $j$th row of $\Bm(\xv)$ is $x^j\cdot(1\;\omega^j\;\dots\;\omega^{j(q-1)})$.  The $j$th element of $\Bm(\xv)\cdot\rv$ is then \[\sum_{i=0}^{q-1} (x\omega^i)^j/x^q=x^{j-q}\sum_{i=0}^{q-1}\omega^{i j}\]

Notice that the sum is 0 unless $j\mod q=0$.  As $1\leq j\leq 2q-1$, the only value for which the sum is non-zero is $j=q$.  Moreover, for $j=q$, the sum goes to $q$, and thus the whole $q$th element evaluates to $q$.  Thus, $\Bm(\xv)\cdot\rv=\hv$, as desired. 

Now we need to count the number of values $z=\Cm(\xv)\cdot\rv$.  Recall that $\Cm(\xv)$ is just the all-ones vector, so we get that $\Cm(\xv)\cdot\rv=q/x^q$.  Since $q|(|\F|-1)$, we know that $q$ is relatively prime to the characteristic of $|\F|$, so $q$ is non-zero.  Moreover, $x^q$, when restricted to non-zero $x$, is exactly $q$-to-1.  Therefore, there are $(|\F|-1)/q$ different values for $z=q/x^q$.  This completes the lower bound.

\paragraph{The case $d<2q-1$.}  It is straightforward to re-work the analysis from the Polynomial Interpolation case to show that we can still interpolate the polynomial completely in this setting, even though we can no longer query on 0.  In fact, the exact same bound given in the Polynomial Interpolation setting (namely a success probability of at least $1-\frac{(d/2+1)!+d+1}{|\F|}$) can be given.

\bibliographystyle{alpha}
\bibliography{abbrev0,crypto,bib}

\appendix

\end{document}